\newtheorem{lem}{Lemma}
\newcommand*{\affmark}[1][*]{\textsuperscript{\dag}}
\def\BState{\State\hskip-\ALG@thistlm}
\acrodef{THz}[THz]{Tera-Hertz}
\acrodef{MIMO}[MIMO]{multiple-input multiple-output}
\acrodef{AWGN}[AWGN]{additive white Gaussian noise}
\acrodef{LoS}[LoS]{line-of-sight}
\acrodef{NLoS}[NLoS]{non-line-of-sight}
\acrodef{BS}[BS]{base station}
\acrodef{UE}[UE]{user equipment}
\acrodef{AoD}[AoD]{angle of departure}
\acrodef{AoA}[AoA]{angle of arrival}
\acrodef{ULA}[ULA]{uniform linear array}
\acrodef{HITRAN}[HITRAN]{HIgh resolution TRANsmission}
\acrodef{i.i.d.}[i.i.d.]{independent and identically distributed}
\acrodef{RF}[RF]{radio frequency}
\acrodef{LS}[LS]{Least Square}
\acrodef{MMSE}[MMSE]{Minimum Mean Square Error}
\acrodef{OMP}[OMP]{
orthogonal matching pursuit}
\acrodef{BL}[BL]{Bayesian learning}
\acrodef{EM}[EM]{Expectation-Maximization}
\acrodef{E-step}[E-step]{Expectation-step}
\acrodef{M-step}[M-step]{Maximization-step}
\acrodef{BCRB}[BCRB]{Bayesian Cramer-Rao Bound}
\acrodef{BCRLB}[BCRLB]{Bayesian Cramer-Rao Lower Bound}
\acrodef{MSE}[MSE]{mean square error}
\acrodef{FIM}[FIM]{Fisher information matrix}
\acrodef{pdf}[PDF]{Probability Density Function}
\acrodef{CSI}[CSI]{channel state information}
\acrodef{SVD}[SVD]{singular value decomposition}
\acrodef{DFT}[DFT]{discrete Fourier transform}
\acrodef{BER}[BER]{bit-error rate}
\acrodef{Tbps}[Tbps]{Tera-bits per second }
\acrodef{AR}[AR]{augmented reality}
\acrodef{VR}[VR]{virtual reality}
\acrodef{SNR}[SNR]{signal-to-noise power ratio} 
\acrodef{mmWave}[mmWave]{milli-meter wave}
\acrodef{APSs}[APSs]{analog phase shifters}
\acrodef{DSP}[DSP]{digital signal processor}
\acrodef{PA}[PA]{\textit{a priori} aided }
\acrodef{DLA}[DLA]{discrete lens antenna}
\acrodef{MMV}[MMV]{multiple measurement vectors}
\acrodef{MBL}[MBL]{MMV-BL}
\acrodef{AMP}[AMP]{approximate message passing }
\acrodef{SISO}[SISO]{single-input single-output}
\acrodef{IRS}[IRS]{intelligent reflecting surface}
\acrodef{QPSK}[QPSK]{quadrature-phase shift keying}
\acrodef{NMSE}[NMSE]{normalized MSE}
\acrodef{FOCUSS}[FOCUSS]{FOCal Underdetermined System Solver}
\acrodef{ASE}[ASE]{achievable spectral-efficiency}
\acrodef{OFDM}[OFDM]{orthogonal frequency division multiplexing}
\begin{document}
\title{\Large Hybrid Transceiver Design for Tera-Hertz MIMO Systems Relying on Bayesian Learning Aided Sparse Channel Estimation}\vspace{-5pt}
\author{\small Suraj~Srivastava, \textit{Student Member,~IEEE,} Ajeet~Tripathi, Neeraj~Varshney, \textit{Member,~IEEE,} \\ Aditya~K.~Jagannatham, \textit{Member,~IEEE,} and Lajos~Hanzo,  \textit{Fellow,~IEEE}
\thanks{S. Srivastava, A Tripathi and A. K. Jagannatham are with the Department of Electrical Engineering, Indian Institute of Technology Kanpur, India-208016 (e-mail: \{ssrivast, tajeet, adityaj\}@iitk.ac.in). N. Varshney is with the Wireless Networks Division, National Institute of Standards and Technology (NIST), Gaithersburg, MD 20899-6730, USA (E-mail: neerajv@ieee.org). L. Hanzo is with the School of Electronics and Computer Science, University of Southampton, Southampton SO17 1BJ, U.K. (e-mail: lh@ecs.soton.ac.uk).
 }\vspace{-25pt}
}
\maketitle
\vspace{-25pt}
\begin{abstract}
Hybrid transceiver design in \ac{MIMO} \ac{THz} systems relying on sparse \ac{CSI} estimation techniques is conceived. To begin with, a practical \ac{MIMO} channel model is developed for the \ac{THz} band that incorporates its molecular absorption and reflection losses, as well as its \ac{NLoS} rays associated with its diffused components. Subsequently, a novel \ac{CSI} estimation model is derived by exploiting the angular-sparsity of the \ac{THz} \ac{MIMO} channel. Then an \ac{OMP}-based framework is  conceived, followed by designing a sophisticated \ac{BL}-based approach for efficient estimation of the sparse \ac{THz} \ac{MIMO} channel. The \ac{BCRLB} is also determined for benchmarking the performance of the \ac{CSI} estimation techniques developed. Finally, an optimal hybrid transmit precoder and receiver combiner pair is designed, which directly relies on the beamspace domain \ac{CSI} estimates and only requires limited feedback. Finally, simulation results are provided for quantifying the improved \ac{MSE}, spectral-efficiency (SE) and \ac{BER} performance for transmission on  practical \ac{THz} \ac{MIMO} channel obtained from the \ac{HITRAN}-database.  
\end{abstract}
\begin{IEEEkeywords}
Bayesian learning, beamspace representation, \ac{HITRAN}-database, hybrid \ac{MIMO} systems, molecular absorption,  sparse channel estimation, tera-Hertz communication, transceiver design 
\end{IEEEkeywords}
\IEEEpeerreviewmaketitle
\vspace{-2pt}
\section{Introduction}
\IEEEPARstart{T}{era-Hertz} (\ac{THz}) wireless systems are capable of supporting data rates up to several \ac{Tbps} \cite{han2018propagation, chen2019survey, faisal2020ultramassive} in the emerging 6G landscape. The availability of large blocks of spectrum in the \ac{THz} band, in the range of $0.1$\,\ac{THz} to $10$\,\ac{THz}, can readily fulfil the ever-increasing demand for data rates. This can in turn support several bandwidth-thirsty applications such as \ac{AR}, \ac{VR}, wireless backhaul and ultra-high speed indoor communication \cite{han2018propagation}. However, due to their high  carrier frequency, \ac{THz} signals experience severe propagation losses and blockage, beyond a few meters. Moreover, the high molecular absorption due to the  vibrations of the molecules at specific frequencies, and  the higher-order reflections \cite{jornet2011channel}  become cumbersome in the \ac{THz} band. Hence, the practical realization of \ac{THz} systems faces numerous challenges. A promising technique of overcoming these obstacles is constituted by multiple-input multiple-output (\ac{MIMO}) solutions relying on antenna arrays, which are capable of improving the signal strength at the receiver via the formation of `pencil-sharp beams' having ultra-high directional gains \cite{sarieddeen2020overview}. However, the conventional \ac{MIMO} transceiver architecture, wherein each transmit and receive antenna is connected to an individual \ac{RF} chain, becomes unsuitable at such high frequencies, mainly due to the power hungry nature of the  analog-to-digital converters coupled with their high sampling-rate \cite{he2020beamspace}. Hence, the hybrid transceiver architecture, originally proposed by Molish \textit{et al.} in their pioneering work \cite{molisch2017hybrid,zhang2005variable}, is an attractive choice for such systems, since it allows the realization of a practical transceiver employing only a few RF chains. Furthermore, in  conventional \ac{MIMO} systems, the various signal processing operations are typically implemented in the digital domain. By contrast, the signal processing tasks are judiciously partitioned between the \ac{RF} front-end and baseband processor in a hybrid \ac{MIMO} transceiver, with the former handling the analog processing via \ac{APSs}, while the latter achieves baseband processing in a \ac{DSP}. {Naturally, the overall performance of the hybrid architecture, for example, its  \ac{ASE} and bit-error-rate (\ac{BER}), critically depend on the design of the baseband and \ac{RF} precoder/ combiner, which ultimately relies on the accuracy of the available channel state information (\ac{CSI}). Thus, high-precision channel estimation holds the key for attaining robust performance and ultimately for realizing the full potential of \ac{THz} \ac{MIMO} systems. A detailed overview and a comparative survey of the related works is presented next.}

\subsection{Related Works and Contributions}
The pioneering contribution of Jornet and Akyildiz \cite{jornet2011channel} developed a novel channel model for the entire \ac{THz} band, i.e. for the band spanning $0.1-10$ \ac{THz}. Their ground-breaking work relied on the concepts of radiative transfer theory \cite{goody1995atmospheric} and molecular absorption for developing a comprehensive model  \cite{rothman2009hitran}. Their treatise evaluated the total path-loss by meticulously accounting for the molecular absorption, the reflections as well as for the free-space loss components. Later Yin and Li \cite{lin2015adaptive} developed a general \ac{MIMO} channel model for a hybrid \ac{THz} system and subsequently proposed distance-aware adaptive beamforming techniques for improving the \ac{SNR}. However, their framework assumes the availability of perfect \ac{CSI}, which is rarely possible in practice. To elaborate, \ac{CSI} estimation in a \ac{THz} hybrid \ac{MIMO} system is extremely challenging owing to the low \ac{SNR} and massive number of antenna elements. Hence, the conventional \ac{LS} and \ac{MMSE}-based \ac{CSI} estimation would incur an excessive pilot-overhead. Therefore, they are unsuitable for \ac{CSI} acquisition in practical \ac{THz} systems. 

Early solutions  \cite{9454383, alkhateeb2014channel, 9445013, el2014spatially, srivastava2019quasi, gao2019wideband, 9165794, 8732197} proposed for the \ac{mmWave} band exploited the angular-sparsity of the channel to achieve improved \ac{CSI} estimation and tracking at a substantially reduced pilot-overhead. {Several optimization and machine learning based algorithms are also proposed for hybrid transceiver design in mmWave MIMO systems. In this context, the authors of \cite{8329410} proposed joint beam selection and precoder design for maximizing the sum-rate of a downlink multiuser mmWave MIMO system under transmit power constraints. The pertinent optimization problem has been formulated as a weighted minimum mean squared error (WMMSE) problem, which is then efficiently solved using the penalty dual decomposition method. A joint hybrid precoder design procedure has been described in \cite{8606437} for full-duplex relay-aided multiuser mmWave MIMO systems, considering also the effects of imperfect CSI. The authors of \cite{9110863} and \cite{9610037} successfully developed two-timescale hybrid precoding schemes for maximizing the sum-rate, and reducing the complexity and CSI feedback overhead. A frame-based transmission scenario is considered in their work, wherein each frame comprises a fixed number of time slots. The long-timescale RF precoders are designed based on the available channel statistics and are updated once in a frame. By contrast, the short-timescale baseband precoders are optimized for each time slot based on the low-dimensional effective CSI. Hence, an optimization based solution is developed in \cite{9110863}, whereas a deep neural network (DNN)-aided technique is designed in \cite{9610037}.} The angular-sparsity is also a key feature of the \ac{THz} \ac{MIMO} channel \cite{sarieddeen2020overview,yan2019dynamic}, which arises due to the highly directional beams of large antenna arrays, coupled with high propagation losses and signal blockage in the \ac{THz} regime. In fact, Sarieddeen \textit{et al.}   \cite{sarieddeen2020overview} showed that the \ac{THz} \ac{MIMO} channel is more sparse than its \ac{mmWave} counterpart. However, there are only a few recent studies, such as \cite{schram2018compressive,schram2019approximate,ma2020joint}, which develop sparse recovery based \ac{CSI} estimation techniques for \ac{THz} \ac{MIMO} systems. A brief review of these and the gaps in the existing \ac{THz} literature are described next.  

The early work of Gao \textit{et al.} \cite{gao2016fast}  successfully developed an \textit{a priori} information aided fast \ac{CSI} tracking algorithm for \ac{DLA} array based \ac{THz} \ac{MIMO} systems. Their model relies on a practical user mobility trajectory \cite{zhou1999tracking} to develop a time-evolution based framework for the \ac{AoA}/ \ac{AoD} of each user. Subsequent contributions in this direction, such as \cite{stratidakis2019cooperative} and \cite{stratidakis2020low}, consider \ac{BS} cooperation and a  multi-resolution codebook, respectively, for improving the accuracy of channel tracking obtained via the \textit{a priori} information aided scheme of \cite{gao2016fast}. However, this improved tracking accuracy is achieved at the cost of inter-\ac{BS} cooperation, which necessitates additional infrastructure and control overheads. Kaur \textit{et al.} \cite{kaur2020enhanced} developed a model-driven deep learning technique for enhancing the channel tracking accuracy in a \ac{THz} \ac{MIMO} system. Their algorithm relies on a deep convolutional neural network trained offline in advance to learn the non-linear relationship between the estimates based on \cite{gao2016fast} and the original channel. Another impressive contribution \cite{he2020beamspace} by He \textit{et al.}  proposes a model-driven unsupervised learning network for beamspace channel estimation in wide-band \ac{THz} \ac{MIMO} systems. Furthermore, a deep learning assisted signal detection relying on single-bit quantization is proposed in the recent contribution \cite{9536661}. A fundamental limitation of \cite{gao2016fast,
stratidakis2019cooperative,
stratidakis2020low,
kaur2020enhanced} is that they consider single antenna users. More importantly, their estimation accuracy is highly sensitive to the accuracy of the time-evolution model employed and they do not incorporate the effect of molecular absorption into their THz channel, which renders the model inaccurate in reproducing the true radio propagation environment. 

{Schram \textit{et al.} \cite{schram2019approximate} employed an \ac{AMP}-based framework for \ac{CSI} estimation in \ac{THz} systems. The sparse channel estimation framework developed considers only a \ac{SISO} \ac{THz} system, where the channel impulse response (CIR) is assumed to be sparse. Ma \textit{et al.} \cite{ma2020joint} conceived sparse beamspace \ac{CSI} estimation for \ac{IRS}-based \ac{THz} \ac{MIMO} systems.} The optimal design of the phase shift matrix at the \ac{IRS} has been determined in their work based on the \ac{BS} to \ac{IRS} and \ac{IRS} to \ac{UE} \ac{THz} \ac{MIMO} channels. Recent treatises, such as \cite{dovelos2021channel, sha2021channel, balevi2021wideband}, address the problem of wideband \ac{CSI} acquisition in \ac{THz} systems. Specifically, Dovelos \textit{et al.} \cite{dovelos2021channel} consider an \ac{OFDM}-based \ac{THz} hybrid \ac{MIMO} system and develop orthogonal matching pursuit (\ac{OMP})-based techniques for \ac{CSI} estimation. Balevi and Andrews  \cite{balevi2021wideband} have considered generative adversarial networks for channel estimation in an \ac{OFDM}-based \ac{THz} hybrid \ac{MIMO} system. On the other hand, Sha and Wang \cite{sha2021channel} derived a \ac{CSI} estimation and equalization technique for a single-carrier THz \ac{SISO} system accounting also for realistic \ac{RF} impairments. A list of novel contributions of our paper is presented next. Our novel contributions are also boldly and explicitly contrasted to the existing literature in Table-\ref{tab:lit_rev}.

\begin{table*}

    \centering
\caption{\small Boldly contrasting our novel contributions to the state-of-the-art}\label{tab:lit_rev}

\begin{tabular}{|l|c|c|c|c|c|c|c|c|c|c|c|c|c|c|c|c|c|c|}

    \hline

\textbf{Feautures}  &\cite{jornet2011channel} &\cite{he2020beamspace}  &\cite{lin2015adaptive}  &\cite{ma2020joint} &\cite{kaur2020enhanced}  &\cite{gao2016fast} &\cite{schram2019approximate} & \cite{balevi2021wideband} &\cite{dovelos2021channel} &\cite{sha2021channel} &\textbf{Proposed} \\

 \hline

\ac{THz} hybrid \ac{MIMO}

& & &\checkmark & \checkmark & & & & \checkmark &\checkmark & & \checkmark\\

 \hline
 
\ac{APSs}-based hybrid architecture

& & &\checkmark & \checkmark & & & & \checkmark &\checkmark & & \checkmark\\

 \hline
 
Single antenna users

 & &\checkmark &\checkmark & &  \checkmark &\checkmark & \checkmark & &\checkmark & & \checkmark\\

 \hline
 
\ac{CSI} estimation 
& &\checkmark & & \checkmark &  \checkmark &\checkmark &\checkmark & \checkmark &\checkmark & \checkmark & \checkmark\\

 \hline
 
 Diffused-ray modeling & & & & & & & & & &  &\checkmark\\

\hline
 
 Angular-sparsity 
 & &\checkmark &\checkmark &\checkmark & &\checkmark & \checkmark & &\checkmark & & \checkmark\\

 \hline
 
Molecular absorption losses &\checkmark & &\checkmark & &  & & & &\checkmark & & \checkmark\\

\hline
 
Reflection losses

&  &  &\checkmark &  &   &    &  &   &\checkmark &   & \checkmark\\

 \hline
 
 Optimal pilot design

&  &  &  &  &   &  &  &   &  & \checkmark & \checkmark\\

 \hline
 
Transceiver design

&  &  &  &\checkmark &   &    &  &   &\checkmark &   & \checkmark\\

 \hline
 
Optimal power allocation

&  &  &\checkmark &  &   &    &  &   &  &  & \checkmark\\
 \hline

Limited \ac{CSI} feedback

&  &  &  &  &   &    &  &   &  &   & \checkmark\\

 \hline
 
MSE lower bound

&  &  &  &  &   &    &  &   & \checkmark & \checkmark & \checkmark\\
\hline
\end{tabular}
\end{table*}

{
\subsection{Novel Contributions}
\begin{enumerate}
    \item We commence by developing a practical distance and frequency dependent \ac{THz} \ac{MIMO} channel model that also incorporates the molecular absorption and reflection losses together with the traditional free-space loss. Note that almost all the existing contributions utilize the classical Saleh-Valenzuela channel model of \cite{lin2015indoor}, which does not consider the diffused rays for each multipath component together with 
    first- and second-order reflections. Furthermore, the path-gains in most of the existing treatises are simply modeled as Rayleigh fading channel coefficients without considering the molecular absorption and multiple reflections. Hence, an important aspect of the channel model developed is that it incorporates several diffused rays for each of the reflected multipath components including their associated reflection and molecular absorption losses. This results in broadening of the beamwidths of the signals and mimics a practical THz MIMO channel.  
    \item The existing research on the development of sparse CSI estimation schemes for a point-to-point analog phase shifter (APS) based hybrid MIMO THz system is very limited, since most of them have considered only single-antenna users, focusing predominantly on discrete lens antenna (DLA) arrays. Hence for  considering a point-to-point APS-based hybrid MIMO architecture, an efficient frame-based channel estimation model is developed, which frugally employs a low number of pilot beams for exciting the various angular modes of the channel. Subsequently, using a suitable `sparsifying'-dictionary, a beamspace representation is developed for the \ac{THz} \ac{MIMO} channel, followed by the pertinent sparse channel estimation model. For this, \ac{BL}-based channel estimation techniques are derived for exploiting the sparsity of the \ac{THz} \ac{MIMO} channel. Note that the proposed BL-based technique is novel in the context of THz MIMO channel estimation, since it has not been explored as yet in THz hybrid MIMO systems. 
    \item The design of the optimal pilot beams used for CSI estimation, which can significantly enhance sparse signal recovery, has not been considered in the existing THz literature either. Moreover, it is also desirable to develop bounds to benchmark the performance of the CSI estimation schemes. To this end, another key contribution of this work is the design of a specific pilot matrix that minimizes the so-called `total-coherence'\footnote{The total coherence of a matrix $\widetilde{\mathbf \Phi}$ having $G$ columns, denoted as $\mu^t\left(\widetilde{\mathbf \Phi}\right)$, is defined as $\mu^t\left(\widetilde{\mathbf \Phi}\right) = \sum_{i = 1}^{G}\sum_{j=1, j \neq i}^{G} \left\vert \widetilde{\mathbf \Phi}_{i}^H  \widetilde{\mathbf \Phi}_{j} \right\vert^2$, where the quantities $\widetilde{\mathbf \Phi}_{i}$ and $\widetilde{\mathbf \Phi}_{j}$ represent the $i$th and $j$th columns, respectively, of the matrix $\widetilde{\mathbf \Phi}$.} defined in \cite{elad2007optimized, li2013projection} for enhancing the performance of sparse signal recovery. Furthermore, to benchmark the \ac{MSE} performance of our sparse CSI estimators, the Bayesian Cramer-Rao lower bound (\ac{BCRLB}) is also derived for the \ac{CSI} estimates.
    \item To the best of our knowledge, the existing hybrid transceiver design approaches found in the \ac{THz} literature, such as \cite{yuan2018hybrid}, assume the availability of perfect \ac{CSI}, which is impractical due to the large number of antennas, resulting in excessive pilot overheads. Crucially, no joint beamspace channel estimation and hybrid transceiver design procedure is available in the \ac{THz} literature. To address this problem, a capacity-approaching hybrid transmit precoder (TPC) and \ac{MMSE}-optimal hybrid receiver combiner (RC) are developed, which can directly employ the estimate of the beamspace domain channel obtained from the proposed \ac{CSI} estimators. The proposed algorithm requires only limited CSI of the beamspace channel, namely the non-zero coefficients and their respective indices, which substantially reduces the feedback required. Furthermore, in contrast to the existing hybrid transceiver designs \cite{alkhateeb2014channel, el2014spatially, 9179621, 9395091}, the proposed hybrid transceiver design requires no iterations, and hence it is computationally efficient.
    \item Our simulation results demonstrate the enhanced  performance of our channel estimators, TPC and RC for various practical simulation parameters. In this context, this paper calculates the molecular absorption coefficient using the parameters obtained from the HITRAN database \cite{hitran}, which is suitable for the entire THz band, specifically for the higher end spanning $1$ to $10$ THz. On the other hand, most of the existing works employ  models, which are only valid for the lower end around $0.1$ to $0.3$ THz.
\end{enumerate}
}
\subsection{Organization and Notation}
{The main focus of this work is on hybrid transceiver design relying on the BL-based estimated beamspace domain CSI. To achieve this, in Section-\ref{sec:THz_channel_model}, we begin with the THz MIMO system and channel model, which incorporates the specific molecular absorption and reflection losses arising in the THz regime. This is followed by developing its sparse beamspace domain representation and a novel frame-based channel estimation model in Section-\ref{sec:channel_est}, which excites various angular modes of the THz MIMO channel. Furthermore, in order to improve the sparse CSI estimation performance, the \textit{mutual coherence} of the equivalent sensing matrix has also been minimized in Section-\ref{sec:channel_est}, which results in the optimal choice of the training precoders/combiners to be employed during channel estimation. Subsequently, the proposed BL and MBL-based sparse channel estimation schemes are developed in Section-\ref{sec:BL_based_est}, which is followed by the BCRLB for benchmarking their CSI estimation performance. Finally, based on the estimated CSI, the problem of designing the capacity-optimal hybrid precoder and MMSE-optimal hybrid combiner is addressed in Section-\ref{prec_comb}. Our simulation results are presented in Section-\ref{sec:simulation_results}, followed by our conclusions in Section-\ref{sec:conclusion}.}

Notation: 
% Bold face lower case and upper case letters are used to denote vectors and matrices, respectively. 
The notation $\mathrm{floor}[a]$ represents the greatest integer, which is less than $a$, whereas $\mathrm{rem}[a,b]$ denotes the remainder, when $a$ is divided by $b$; the $i$th element of the vector $\mathbf{a}$ and  $(i,j)$th element of the matrix $\mathbf{A}$ are denoted by $\mathbf{a}(i)$ and $\mathbf{A}(i,j)$, respectively; $\mathbf{I}_N$ denotes an identity matrix of size $N$; $\mathrm{vec}(\mathbf{A})$ vectorizes the columns of the matrix $\mathbf{A}$ and $\mathrm{vec}^{-1} (\mathbf{a})$ denotes the inverse vectorization operation; the Kronecker product of two matrices $\mathbf{A}$ and $\mathbf{B}$ is denoted by $\mathbf{A} \otimes \mathbf{B}$;
% The quantity $\mathrm{blkdiag} \left(\mathbf{A}_1,\mathbf{A}_2, \cdots, \mathbf{A}_N\right)$ denotes a block-diagonal matrix with the matrices $\mathbf{A}_1,\mathbf{A}_2,\cdots,\mathbf{A}_N$ along its principal diagonal. 
% Superscripts $(\cdot)^T$, $(\cdot)^H$, $(\cdot)^{*}$,  $(\cdot)^{-1}$ and $(\cdot)^{\dagger}$ denote the transpose, Hermitian, conjugation, inverse and pseudo-inverse of a matrix, respectively.  and $\mathrm{Tr}(\cdot)$, respectively, whereas 
the $l_2$- and Frobenius-norm are represented by $\|\cdot\|_2$  and $\|\cdot\|_F$, respectively. 
% The statistical expectation operator is denoted by $\mathbb{E}\{{\cdot}\}$. 
\begin{figure*}
\centering\includegraphics[scale=0.6]{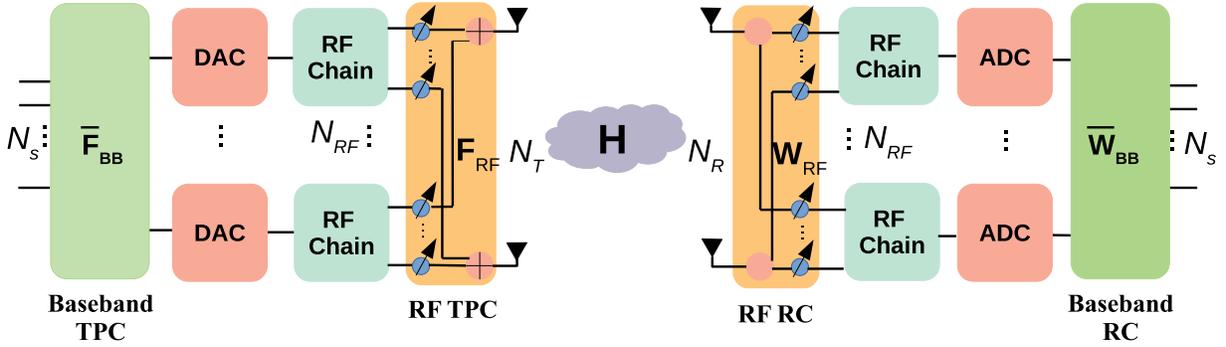}
\centering\caption{Block diagram of a \ac{THz} hybrid \ac{MIMO} system. \vspace{-15pt}}
\centering\label{fig:THz_mimo:schematic}
\end{figure*}
\section{\ac{THz} \ac{MIMO} System and Channel Model}\label{sec:THz_channel_model}
The schematic of our \ac{THz} hybrid \ac{MIMO} system is portrayed in Fig. \ref{fig:THz_mimo:schematic}, where $N_T$ and $N_R$ denote the number of transmit antennas (TAs) and receive antennas (RAs), respectively, whereas $N_{\text{RF}}$ denotes the number of \ac{RF} chains. Furthermore, $N_S$ is the number of data streams, where $N_S \leq N_{\text{RF}}$, while $N_{\text{RF}} << \min (N_T, N_R)$ \cite{ma2020joint,sarieddeen2020overview}. The transmitter is composed of two major blocks, the digital baseband TPC $\bar{\mathbf{F}}_{\text{BB}} \in \mathbb{C}^{N_{\text{RF}} \times N_S}$ and the analog \ac{RF} TPC $\bar{\mathbf{F}}_{\text{RF}} \in \mathbb{C}^{N_{T} \times N_{\text{RF}}}$. At the receiver side, $\bar{\mathbf{W}}_{\text{RF}} \in \mathbb{C}^{N_R \times N_{\text{RF}}}$ denotes the \ac{RF} RC, whereas  $\bar{\mathbf{W}}_{\text{BB}} \in \mathbb{C}^{N_{\text{RF}} \times N_S}$ represents the baseband RC. As described in \cite{ma2020joint,sarieddeen2020overview}, the analog \ac{RF} TPC $\bar{\mathbf{F}}_{\text{RF}}$ and RC $\bar{\mathbf{W}}_{\text{RF}}$ are comprised of \ac{APSs}. Hence, for simplicity, these are constrained as $\vert \bar{\mathbf{F}}_{\text{RF}}(i,j)\vert = \frac{1}{\sqrt{N_T}}, \vert \bar{\mathbf{W}}_{\text{RF}}(i,j)\vert = \frac{1}{\sqrt{N_R}}, \forall i,j$. Thus, the baseband system model of our \ac{THz} \ac{MIMO} system is given by 
\begin{align}
\bar{\mathbf y} = \bar{\mathbf{W}}_{\text{BB}}^H \bar{\mathbf{W}}_{\text{RF}}^H \mathbf{H} \bar{\mathbf{F}}_{\text{RF}} \bar{\mathbf{F}}_{\text{BB}} \bar{\mathbf x} + \bar{\mathbf{W}}_{\text{BB}}^H \bar{\mathbf{W}}_{\text{RF}}^H \bar{\mathbf v}, \label{eq:th_sys_mdl}
\end{align}
where  $\bar{\mathbf y} \in \mathbb{C}^{N_S \times 1}$ is the signal vector received at the output of the baseband RC, $\bar{\mathbf x} \in \mathbb{C}^{N_S \times 1}$ represents the transmit baseband signal vector at the input of the baseband TPC, whereas the quantity $\bar{\mathbf v} \in \mathbb{C}^{N_R \times 1}$ is the complex \ac{AWGN} at the receiver having the distribution of $\mathcal{CN}\left(\mathbf{0}_{N_R\times 1}, \sigma_v^2 \textbf{I}_{N_R}\right)$. The matrix $\mathbf{H} \in \mathbb{C}^{N_R \times N_T}$ in \eqref{eq:th_sys_mdl} represents the baseband equivalent of the \ac{THz} \ac{MIMO} channel, whose relevant model is described next.
\subsection{\ac{THz} \ac{MIMO} Channel Model}
As described in \cite{lin2015adaptive}, the \ac{THz} \ac{MIMO} channel can be modeled as the aggregation of a \ac{LoS} and a few \ac{NLoS} components. The \ac{LoS} propagation results in a direct path between the \ac{BS} and the \ac{UE}, whereas the \ac{NLoS} propagation results in some indirect multipath rays after reflection from the various scatterers present in the environment. Thus, the \ac{THz} \ac{MIMO} channel $\mathbf{H}$, which is  a function of the operating frequency $f$ and distance $d$, can be expressed as 
\begin{align}
\mathbf{H}(f,d)=\mathbf{H_\text{LoS}}(f,d)+\mathbf{H_\text{NLoS}}(f,d),
\label{eq:ch_model}
\end{align}
where the \ac{LoS} and \ac{NLoS} components are given by
\begin{align}
    \mathbf{H_\text{LoS}}(f,d)&=\sqrt{\frac{N_{T}N_{R}}{N_{\text{ray}}}} \sum_{j=1}^{N_{\text{ray}}}\alpha_{\text{L},j}(f,d) G_{t}^{a} G_{r}^{a}\mathbf{a}_{r}\left(\phi_{\text{L},j}^{r}\right)\mathbf{a}_{t}^{H}\left(\phi_{\text{L},j}^{t}\right), \label{eq:LOS_model}\\
    \mathbf{H_\text{NLoS}}(f,d)&=\sqrt{\frac{N_{T}N_{R}}{N_{\text{NLoS}}N_{\text{ray}}}} \sum\limits_{i=1}^{N_\text{NLoS}} \sum_{j=1}^{N_{\text{ray}}} \alpha_{i,j}(f,d) G_{t}^{a}G_{r}^{a}\mathbf{a}_{r} \left(\phi_{i,j}^{r}\right)\mathbf{a}_{t}^{H}\left(\phi_{i,j}^{t}\right)\label{eq:NLOS_model}.
\end{align}
Here, the quantities $\alpha_{\text{L},j}(f,d)$ and  $\alpha_{i,j}(f,d)$ represent the complex-valued path-gains of the \ac{LoS} and \ac{NLoS} components, respectively, $N_\text{NLoS}$ denotes the number of \ac{NLoS} multipath components, whereas $N_{\text{ray}}$ signifies the number of diffused-rays in each multipath component. Furthermore, $G_t^{a}$ and $G_r^{a}$ represent the TA and RA gains, respectively. The quantities $\phi^{r}_{\text{L},j}$ and $\phi^{t}_{\text{L},j}$ denote the \ac{AoA} and \ac{AoD} of the $j$th ray in the \ac{LoS} multipath component, respectively, whereas $\phi^{r}_{i,j}$ and $\phi^{t}_{i,j}$ represent the \ac{AoA} and \ac{AoD} of the $j$th diffuse-ray in the $i$th \ac{NLoS} multipath component. The vectors $\mathbf{a}_{r}(\phi^{r})  \in \mathbb{C}^{N_R \times 1}$ and $\mathbf{a}_{t}(\phi^{t})  \in \mathbb{C}^{N_T \times 1}$ denote the array response vectors of the \ac{ULA} corresponding to the \ac{AoA} $\phi^{r}$ at the receiver and \ac{AoD} $\phi^{t}$ at the transmitter, respectively. These are defined as
\begin{align}
\mathbf{a}_r(\phi^r)=\displaystyle \frac{1}{\sqrt{N_R}}\Big [ 1,  e^{-j\frac{2 \pi }{\lambda }d_r\cos \left(\phi^r\right)}, \ldots, e^{-j\frac{2\pi }{\lambda } {(N_R-1)}d_r\cos \left( \phi^r\right)}\Big ]^T,\\
\mathbf{a}_t(\phi^t)=\displaystyle \frac{1}{\sqrt{N_T}} \Big [ 1,  e^{-j\frac{2 \pi }{\lambda }d_t\cos \left( \phi^t \right)}, \ldots, e^{-j\frac{2\pi }{\lambda } {(N_T-1)}d_t\cos \left( \phi^t \right)}\Big ]^T,\label{eq:Array_resp_vector}
\end{align}
where $d_r$ and $d_t$ represent the antenna-spacings at the receiver and transmitter, respectively, and $\lambda$ denotes the operating wavelength. 

Let the complex path-gain $\alpha(f,d)$ be expressed as  $\alpha(f,d)=|\alpha(f,d)|e^{j\psi}$, where $|\alpha(f,d)|$ is the magnitude of the complex path-gain and $\psi$ is the associated independent phase shift. According to \cite{lin2015adaptive,jornet2011channel}, the magnitude of the \ac{LoS} path gain $|\alpha_{\text{L},j}(f,d)|$ can be modeled as 
\begin{align}
|\alpha_{\text{L},j}(f,d)|^{2}=L_\text{spread}(f,d)L_\text{abs}(f,d),\label{eq:Los_pg}
\end{align} 
where $L_\text{spread}(f,d)$ and $L_\text{abs}(f,d)$ represent the spreading (or the free-space) and molecular absorption losses respectively, which are given by
\begin{align}
L_\text{abs}(f,d)=e^{-k_\text{abs}(f)d},\ \ L_\text{spread}(f,d)=\left(\frac{c}{4\pi fd}\right)^{2}.\label{eq:losses}
\end{align}
Here, $c$ denotes the speed of light in vacuum and $k_\text{abs}(f)$ is the molecular absorption coefficient. 
% , which depends on the operating frequency and the composition of the propagation medium at the molecular level. 
Similarly, for the $j$th diffuse-ray of the $i$th \ac{NLoS} multipath component, the magnitude of the complex path-gain can be expressed as \cite{lin2015adaptive,sarieddeen2020overview} 
\begin{align}
\left|\alpha_{i,j}(f,d)\right|^{2}=\Gamma_{i,j}^{2}(f)L_\text{spread}(f,d)L_\text{abs}(f,d),\label{eq:Nlos_pg}
\end{align}
where $\Gamma_{i,j}(f)$ denotes the first-order reflection coefficient of the $j$th diffuse-ray of the $i$th \ac{NLoS} component. For higher-order reflections, the equivalent reflection coefficient is equal to the product of individual reflection coefficients of the respective scattering media. Further details on the calculation of the absorption coefficient $k_\text{abs}(f)$ and the reflection coefficient $\Gamma_{i,j}(f)$ are given in the subsequent subsections.
\subsection{Calculation of the Reflection Coefficient $\Gamma(f)$}
Due to the small wavelength of the \ac{THz} signal, the reflection coefficient $\Gamma(f)$ is an important parameter to be taken into account while evaluating the losses of the \ac{NLoS} components \cite{piesiewicz2007scattering,lin2015adaptive}. This is in turn defined in terms of the Fresnel reflection coefficient $(\gamma)$ and the Rayleigh roughness factor $(\varrho)$, as $\Gamma(f)=\gamma(f)\varrho(f),$ where the coefficients $\gamma(f)$ and $\varrho(f)$ are given by:
\begin{align}
\gamma(f)=\frac{Z(f)\cos\left(\theta_\text{in}\right)-Z_{0}\cos\left(\theta_\text{ref}\right)}{Z(f)\cos\left(\theta_\text{in}\right)+Z_{0}\cos\left(\theta_\text{ref}\right)} \text{ \ and \ } 
\varrho(f)=e^{-\tfrac{1}{2}\left(\tfrac{4\pi f\sigma\cos\left(\theta_\text{in}\right)}{c}\right)^{2}}. \label{eq:rayleigh_factor}
\end{align}
In the above expressions, $\theta_\text{in}$ denotes the angle of incidence, while $\theta_\text{ref}$ represents the angle of refraction, which obeys   $\theta_\text{ref}=\sin^{-1}\left(\sin(\theta_\text{in})\frac{Z(f)}{Z_{0}}\right)$. The quantity $Z(f)$ denotes the wave impedance of the reflecting medium, whereas $Z_{0}=377\,\Omega$ represents the wave impedance of the free space and $\sigma$ in \eqref{eq:rayleigh_factor} denotes the standard deviation of the reflecting surface's roughness.

\subsection{Absorption Coefficient $k_{\text{abs}}(f)$ \cite{jornet2011channel}}\label{sec:k_abs}
As described in \cite{jornet2011channel}, the absorption coefficient $k_{\text{abs}}(f)$ of the propagation medium at frequency $f$ can be evaluated as
\begin{align}
    k_{\text{abs}}(f)=\sum_{i,g}k_{\text{abs}}^{i,g}(f),
\end{align}
where, $k_{\text{abs}}^{i,g}$ denotes the absorption coefficient of the $i$th isotopologue\footnote{Molecules, which only differ from others in their isotopic composition, are termed as isotopologues of each other.} of the $g$th gas. The quantity $k_{\text{abs}}^{i,g}(f)$ can be mathematically defined as
\begin{align}
    k_{\text{abs}}^{i,g}(f)=\left(\frac{p}{p_{0}}\right)\left(\frac{T_\text{STP}}{T}\right)Q^{i,g}\sigma^{i,g}(f),
\end{align}
where $p$ and $T$ denote the system pressure and temperature, respectively, while $T_\text{STP}$ and $p_0$ represent the temperature at standard pressure and reference pressure, respectively. The quantity $\sigma^{i,g}$ denotes the absorption cross-section of the $i$th isotopologue of the $g$th gas, defined as $\sigma^{i,g}(f)=S^{i,g}G^{i,g}(f),$ and $Q^{i,g}$ is the molecular volumetric density, defined as $Q^{i,g}={\left(\frac{p}{RT}\right)}q^{i,g}N_{A}.$ Here, $R$ denotes the gas constant and $N_A$ represents Avogadro's number. The quantities $q^{i,g}$ and $S^{i,g}$ signify the mixing ratio and the line intensity, respectively, of the $i$th isotopologue of the $g$th gas, which can be directly obtained from the \ac{HITRAN} database  \cite{rothman2009hitran}. The quantity $G^{i,g}(f)$ is the spectral line shape, defined as
\begin{align}
    G^{i,g}(f)=\left(\frac{f}{f_{c}^{i,g}}\right){\frac{\tanh\left({\frac{cfh}{2k_{B}T}}\right)}{\tanh\left({\frac{cf_{c}^{i,g}h}{2k_{B}T}}\right)}}F^{i,g}(f), 
\end{align}
where $k_B$ denotes the Boltzmann constant, $h$ represents the Planck constant and $F^{i,g}(f)$ is the Van Vleck-Weisskopf line shape \cite{van1945shape}, which is evaluated as follows
% \begin{align}
% &F^{i,g}(f)=\left(\frac{100c}{\pi}\right)\left(\frac{f}{f_{c}^{i,g}}\right) \alpha_L^{i,g}.\nonumber\\
%  &\left({\frac{1}{\left(f-f_{c}^{i,g}\right)^{2}+\left(\alpha_L^{i,g}\right)^{2}}+\frac{1}{\left(f+f_{c}^{i,g}\right)^{2}+\left(\alpha_L^{i,g}\right)^{2}}}\right)       
% \end{align}
\begin{align}
&F^{i,g}(f)=\frac{100cf\alpha_L^{i,g}}{\pi f_{c}^{i,g}}\sum_{n=1}^2\frac{1}{(f+(-1)^nf_{c}^{i,g})^{2}+(\alpha_L^{i,g})^{2}}.\!\!
\end{align}
The quantities $f_{c}^{i,g}$ and $\alpha_L^{i,g}$ obey:
\begin{align}
f_{c}^{i,g} = f_{c0}^{i,g}+\delta^{i,g}\frac{p}{p_{0}} \text{ and } \alpha_L^{i,g} = \left[(1-q^{i,g})\alpha_{0}^{\text{air}}+q^{i,g}\alpha_{0}^{i,g}\right] \left(\frac{p}{p_0}\right)\left(\frac{T_0}{T}\right)^{\gamma},
\label{eq:half_width}
\end{align}
where $f_{c0}^{i,g}$ and $\delta^{i,g}$ denote the zero-pressure resonance frequency and linear pressure shift, respectively, which are also obtained from the \ac{HITRAN} database. In \eqref{eq:half_width}, $\alpha_{0}^{\text{air}}$ and $\alpha_{0}^{i,g}$ represent the broadening coefficient of the air and of the $i$th isotopologue of the $g$th gas, respectively, whereas $\gamma$ denotes the temperature broadening coefficient, all of which can be directly obtained from the \ac{HITRAN} database, and the quantity $T_0$ denotes the reference temperature. The parameters involved in the calculation of molecular absorption coefficient $k_{\text{abs}}(f)$, their units and the values of various constants are summarized in Table-I of \cite{jornet2011channel}. Furthermore, the \ac{HITRAN} database is accessible online from \cite{hitran}. 
% The parameters like, line intensity  $S^{i,g}$, zero-pressure resonance frequency $f_{c0}^{i,g}$, linear pressure  shift $\delta^{i,g}$, air broadening coefficient $\alpha_{0}^{\text{air}}$, self broadening coefficient $\alpha_{0}^{i,g}$ and the temperature broadening coefficient $\gamma$ are directly taken from the HITRAN database. While the parameters like, system pressure $p$, system temperature $T$ and mixing ratio $q^{i,g}$ are dependant upon the chemical composition of the channel.
% \subsection{Molecular Absorption Noise}
% The molecular absorption also introduces noise, which is referred to as molecular absorption noise. This originates due to the internal vibrations of the molecules provoked by the molecular absorption, which radiate waves of similar frequencies as that of the incident \ac{THz} waves. The power spectral density $N_{\text{abs}}(f,d)$ of the molecular absorption noise can be evaluated as \cite{jornet2011channel} 
% \begin{align}
%     N_{\text{abs}}(f,d)=k_BT_0\left(1-e^{-k_{\text{abs}}(f)d}\right),
% \end{align}
% using which one can evaluate the molecular absorption noise power as
% \begin{align*}
%     \sigma_{\text{abs}}^2 = \int_{B}N_{\text{abs}}(f,d) \mathrm{d}f.
% \end{align*}

{From the channel model presented in this section, it can be readily observed that the THz MIMO channel is significantly different from its mmWave counterpart. First of all, note that the THz MIMO channel is highly dependent on the carrier frequency $f$ and distance $d$, not just due to the free-space loss $L_\text{spread}(f,d)$, but more importantly due to the nature of molecular absorption loss $L_\text{abs}(f,d) = e^{-k_\text{abs}(f)d}$, where the absorption coefficient $k_\text{abs}(f)$ is highly dependent on the molecular composition of the propagation medium, system pressure, temperature and the operating frequency. In fact, as evaluated in \cite{jornet2011channel}, even the water vapor molecules present in a standard medium lead to a significant loss, which affects the overall system performance in the THz band. By contrast, in the mmWave band, these atmospheric losses only become significant in the presence of raindrops/ fog. Due to this, THz signals experience severe propagation losses beyond a few meters. Furthermore, due to the extremely short wavelength of THz signals, the indoor surfaces, which can be regarded as smooth in the comparatively lower mmWave band, now appear rough in the THz regime \cite{lin2015adaptive}. Hence, it can be observed from \eqref{eq:Los_pg} and \eqref{eq:Nlos_pg} that the complex-valued path gains $\alpha_{\text{L},j}(f,d)$ and $\alpha_{i,j}(f,d)$ in the THz band for the LoS and NLoS components, respectively, differ significantly due to their increased higher-order reflection losses. Additionally, due to the large number of antennas, the THz MIMO channel becomes highly directional and more sparse in nature in comparison to its mmWave counterpart. It has also been verified in \cite{jornet2011channel} and also in our simulation results that at certain frequencies, the molecular absorption is very high, which reduces the total bandwidth to just a few transmission windows. Hence, the molecular absorption plays a critical role in deciding the operating frequency and bandwidth.} The next section describes the channel estimation model proposed for our \ac{THz} \ac{MIMO} systems. For ease of notation, we drop the quantities $(f,d)$ from the \ac{THz} \ac{MIMO} channel representation in the subsequent sections, since these parameters are fixed for the channel under consideration.
\section{\ac{THz} \ac{MIMO} Channel Estimation Model}
\label{sec:channel_est}
Consider the transmission of $N_F = \frac{N_T}{N_\text{RF}}$ training frames and $M_T$ training vectors, where $M_T<N_T$. This implies that $\frac{M_T}{N_F}$ training vectors are  transmitted in each frame. Let $\mathbf{F}_{\text{RF},i} \in \mathbb C^{N_T \times N_\text{RF}}$ represent the RF training TPC and $\mathbf{X}_{p,i} \in \mathbb C^{N_\text{RF} \times \frac{M_T}{N_F}}$ denote the pilot matrix corresponding to the $i$th training frame. The received pilot matrix $\widetilde{\mathbf{Y}}_{i} \in \mathbb C^{N_R \times \frac{M_T}{N_F}}$ can be represented as   
\begin{align}
\widetilde{\mathbf{Y}}_{i} =  \mathbf{H} \mathbf{F}_{\text{RF},i} {\mathbf{X}_{p,i}} + \widetilde{\mathbf{V}}_{i}, 
\end{align}
where  $\widetilde{\mathbf{V}}_{i} \in \mathbb C^{N_R \times \frac{M_T}{N_F}}$ denotes the noise matrix having  \ac{i.i.d.} elements obeying $\mathcal{CN}(0,\sigma_v^2)$. Upon concatenating   $\widetilde{\mathbf{Y}}_{i}$  for $1 \leq i \leq N_F$, as $\widetilde{\mathbf{Y}} =\big[\widetilde{\mathbf{Y}}_{1}, \widetilde{\mathbf{Y}}_{2}, \cdots, \\ \widetilde{\mathbf{Y}}_{N_F}\big] \in \mathbb C^{N_R \times M_T}$, one can model the received pilot matrix as
\begin{align}
\widetilde{\mathbf{Y}} &= \mathbf{H} \mathbf{F}_{\text{RF}} \mathbf{X}_{p} + \widetilde{\mathbf{V}},
\end{align}
where the various quantities are defined as $\mathbf{F}_{\text{RF}} = \left[\mathbf{F}_{\text{RF},1}, \mathbf{F}_{\text{RF},2},\cdots,\mathbf{F}_{\text{RF},N_F}\right] \in  \mathbb C^{N_T \times N_T}, \widetilde{\mathbf{V}} = \left[\widetilde{\mathbf{V}}_{1}, \widetilde{\mathbf{V}}_{2}, \cdots, \widetilde{\mathbf{V}}_{N_F}\right] \in \mathbb C^{N_R \times M_T}$ and 
\begin{align}
\mathbf{X}_{p} = \mathrm{blkdiag} \left(\mathbf{X}_{p,1}, \mathbf{X}_{p,2}, \cdots, \mathbf{X}_{p,N_F}\right) \in \mathbb C^{N_T \times M_T}, \label{eq:X_p}
\end{align}
Similarly, let $N_C = \frac{N_R}{N_\text{RF}}$ represent the number of combining-steps, whereas $M_R$ denote the number of combining vectors. In each combining-step, we combine the pilot output $\widetilde{\mathbf{Y}}$ using $\frac{M_R}{N_C}$ combining vectors in the baseband, where $M_R<N_R$. Let $\mathbf{W}_{\text{RF},j} \in \mathbb{C}^{N_R \times N_\text{RF}}$ denote the RF RC and $\mathbf{W}_{\text{BB},j} \in \mathbb{C}^{N_\text{RF} \times \frac{M_R}{N_C}}$ represent the baseband RC of the $j$th combining step. The received pilot matrix $\mathbf{Y}_j \in \mathbb C^{\frac{M_R}{N_C} \times M_T}$ at the output of the $j$th  baseband RC is obtained as $\mathbf{Y}_j = \mathbf{W}_{\text{BB},j}^H\mathbf{W}_{\text{RF},j}^H \widetilde{\mathbf{Y}}.$ Let $\mathbf{Y} = \left[ \mathbf{Y}_1^T, \mathbf{Y}_2^T, \cdots, \mathbf{Y}_{N_C}^T \right]^T \in \mathbb{C}^{M_R \times M_T}$ represent the stacked received pilot matrices $\mathbf{Y}_j, 1 \leq j \leq N_C$. The end-to-end model can be succinctly represented as
\begin{align}
\mathbf{Y}=\mathbf{W}^H_{\text{BB}} \mathbf{W}_{\text{RF}}^H \mathbf{H} \mathbf{F}_{\text{RF}} \mathbf{X}_{p}+\mathbf{V}, \label{eq:Rx_pilot_Mx_0}
\end{align}
where the various quantities have the following expressions: $\mathbf{W}_{\text{RF}} = \left[\mathbf{W}_{\text{RF},1},\mathbf{W}_{\text{RF},2}, \cdots,\mathbf{W}_{\text{RF},N_C}\right] \in  \mathbb C^{N_R \times N_R}, \mathbf{V} = \mathbf{W}^H_{\text{BB}} \mathbf{W}_{\text{RF}}^H \widetilde{\mathbf{V}} \in \mathbb{C}^{M_R \times M_T}$ and 
\begin{align}
\mathbf{W}_{\text{BB}} = \mathrm{blkdiag} \left(\mathbf{W}_{\text{BB},1},  \cdots, \mathbf{W}_{\text{BB},N_C}\right) \in \mathbb{C}^{N_R \times M_R}. \label{eq:BB_comb}
\end{align}
One can now exploit the properties of the matrix Kronecker product \cite{zhang2013kronecker} to arrive at the following \ac{THz} \ac{MIMO} channel estimation model
\begin{align}
\mathbf{y} = \mathbf{\Phi} \mathbf{h} + \mathbf{v}, \label{eq: equivalent_system_model}
\end{align}	
where $\mathbf{y} = \mathrm{vec}\left(\mathbf{Y}  \right) \in \mathbb{C}^{M_TM_R \times 1}$ represents the received pilot vector and $\mathbf{v} = \mathrm{vec}\left(\mathbf{V}  \right) \in \mathbb{C}^{M_TM_R \times 1}$ denotes the noise vector. The quantity $\mathbf{h} = \mathrm{vec}{\left(\mathbf H\right)}\in\mathbb{C}^{N_TN_R \times 1}$ is the equivalent \ac{THz} \ac{MIMO} channel vector and the matrix $\mathbf{\Phi} \in\mathbb{C}^{{M_TM_R \times N_TN_R}}$ represents the \textit{sensing matrix} obeying $\mathbf{\Phi} = \left[\left(\mathbf{X}_{p}^T \mathbf{F}_{\text{RF}}^T\right) \boldsymbol \otimes \left( \mathbf W^H_{\text{BB}} \mathbf W^H_{\text{RF}} \right)\right].$ Finally, the noise covariance matrix $\mathbf{R}_{v} \in \mathbb{C}^{M_TM_R \times M_TM_R}$, defined as $\mathbf{R}_{v} = \mathbb{E}\left\{\mathbf{v}\mathbf{v}^H\right\}$, is given as $\mathbf{R}_{v} = \sigma_v^2 \left[\mathbf{I}_{M_T} \boldsymbol \otimes \left( \mathbf{W}^H_{\text{BB}} \mathbf{W}_{\text{RF}}^H  \mathbf{W}_{\text{RF}} \mathbf{W}_{\text{BB}} \right)\right].$ At this point, it can be noted that the expressions for the conventional \ac{LS} and \ac{MMSE} estimates of the \ac{THz} \ac{MIMO} channel vector $\mathbf{h}$ can be readily  derived from the simplified model in \eqref{eq: equivalent_system_model}, as 
\begin{align}
\widehat{\mathbf{h}}^{\text{LS}} = \left(   \mathbf{\Phi} \right)^{\dagger}  \mathbf{y} \text{ \ and \ }
\widehat{\mathbf{h}}^{\text{MMSE}} = \left( \mathbf{R}_{h}^{-1} +  \mathbf{\Phi}^H \mathbf{R}_{v}^{-1}  \mathbf{\Phi} \right)^{-1}  \mathbf{\Phi}^H \mathbf{y},
\end{align}
where $\mathbf{R}_{h} = \mathbb{E} \left[ \mathbf{h} \mathbf{h}^H \right] \in\mathbb{C}^{N_TN_R \times N_TN_R}$ represents the channel's covariance matrix. However, a significant drawback of these conventional estimation techniques is that they require an over-determined system, i.e., $M_TM_R \geq N_TN_R$, for reliable channel estimation. This results in unsustainably high training overheads due to the high number of antennas. Thus, conventional channel estimation techniques are inefficient for such systems. Furthermore, as described in \cite{sarieddeen2020overview, lin2015adaptive}, the \ac{THz} \ac{MIMO} channel exhibits \textit{angular-sparsity}, which is not exploited by these conventional techniques. Leveraging the sparsity of the \ac{THz} \ac{MIMO} channel can lead to significantly improved channel estimation performance as well as bandwidth-efficiency, specifically where we have the `ill-posed' \ac{THz} \ac{MIMO} channel estimation scenario of $M_TM_R << N_TN_R$. Thus, the next subsection derives a sparse channel estimation model for \ac{THz} \ac{MIMO} systems. 
\subsection{Sparse \ac{THz} \ac{MIMO} Channel Estimation Model} 
Let $G_T$ and $G_R$ signify the angular grid-sizes obeying $(G_T, G_R) \geq \max(N_T,N_R)$. The angular grids $\Phi_{T}$ and $\Phi_{R}$ for the \ac{AoD} and \ac{AoA}, respectively, are given as follows, which are constructed by assuming the directional-cosines $\cos(\phi_i)$ to be uniformly spaced between $-1$ to $1$:
\begin{align}
\Phi_T &= \left\{ \phi_i:\cos(\phi_i) =\frac{2}{G_T}(i-1)-1, \ 1 \leq i \leq G_T\right\}\!{,}\!\! \label{eq:dir_cos_tx}\\
\Phi_R &= \left\{ \phi_j:\cos(\phi_j) =\frac{2}{G_R}(j-1){-}1, \ 1 \leq j \leq G_R\right\}\!{.}\!\! \label{eq:dir_cos_rx} 
\end{align}
Let $\mathbf{A}_R (\Phi_R) \in \mathbb C^{N_R \times G_R}$ and $\mathbf{A}_T (\Phi_T ) \in  \mathbb C^{N_T \times G_T}$ represent the dictionary matrices of the array responses constructed using the angular-grids $\Phi_{R}$ and $\Phi_{T}$ as follows
\begin{align}
\mathbf A_R (\Phi_R ) = [\mathbf{a}_r (\phi_1 ), \mathbf{a}_r (\phi_2 ), \cdots,  \mathbf a_r (\phi_{G_R})], \ \mathbf A_T (\Phi_T ) = [\mathbf{a}_t (\phi_1 ), \mathbf{a}_t (\phi_2 ), \cdots~,  \mathbf a_t (\phi_{G_T})]. \label{eq:rx_array_res_dict}
\end{align}
% \begin{align}
% \mathbf A_R (\Phi_R ) &= [\mathbf{a}_r (\phi_1 ), \mathbf{a}_r (\phi_2 ), \cdots,  \mathbf a_r (\phi_{G_R})], \label{eq:rx_array_res_dict}\\
% \mathbf A_T (\Phi_T ) &= [\mathbf{a}_t (\phi_1 ), \mathbf{a}_t (\phi_2 ), \cdots~,  \mathbf a_t (\phi_{G_T})]. \label{eq:tx_array_res_dict}
% \end{align}
Owing to the choice of grid angles considered in \eqref{eq:dir_cos_tx} and \eqref{eq:dir_cos_rx}, the matrices $\mathbf A_R (\Phi_R )$ and $\mathbf A_T (\Phi_T )$ are semi-unitary, i.e., they satisfy
\begin{align}
% \mathbf{A}_R(\Phi_R)\mathbf{A}_R^H(\Phi_R)=\displaystyle\frac{G_R}{N_R} \mathbf{I}_{N_R},\ \mathbf{A}_T(\Phi_T)\mathbf{A}_T^H(\Phi_T)=\displaystyle\frac{G_T}{N_T} \mathbf{I}_{N_T} \label{eq:sem _unit}.
\mathbf{A}_i(\Phi_i)\mathbf{A}_i^H(\Phi_i)=\displaystyle\frac{G_i}{N_i} \mathbf{I}_{N_i}, ~~~~ i \in\{R,T\}. \label{eq:sem _unit}
\end{align} 
Using the above quantities, an equivalent angular-domain beamspace representation \cite{srivastava2019quasi,he2020beamspace} of the \ac{THz} \ac{MIMO} channel  $\mathbf{H}$ (c.f. \eqref{eq:ch_model}) can be obtained as 
\begin{align}
\mathbf H \simeq \mathbf A_{R} (\Phi_{R}) \mathbf H_{b} \mathbf A^H_{T}(\Phi_{T} ), \label{eq:beamspacechannel_matrix} \end{align}
where $\mathbf H_{b} \in \mathbb C^{G_R \times G_T}$ signifies the beamspace domain channel matrix. Note that when the grid sizes $G_R$ and $G_T$ are large, i.e., the quantization of \ac{AoA}/ \ac{AoD} grids is fine enough, the above approximate relationship holds with equality. Due to high free-space loss, as well as reflection and molecular absorption losses in a \ac{THz} system, the number of multipath components is significantly lower \cite{yan2019dynamic, sarieddeen2020overview, lin2015adaptive}. Furthermore, the \ac{THz} MIMO channel  comprises only a few highly-directional beams, which results in an angularly-sparse multipath channel. Hence, only a few active \ac{AoA}/ \ac{AoD} pairs exist in the channel, which makes the beamspace channel matrix $\mathbf H_{b}$ sparse in nature.

Once again, upon exploiting the properties of the matrix Kronecker product in \eqref{eq:beamspacechannel_matrix}, one obtains 
\begin{align}
\mathbf{h} = \mathrm{vec}(\mathbf H) = \left[ \mathbf A^{*}_T (\Phi_T ) \boldsymbol \otimes \mathbf A_R(\Phi_R) \right] \mathbf{h}_{b}, \label{eq: beamspacechannel_vector}
\end{align}
where $\mathbf{h}_{b} = \mathrm{vec}(\mathbf H_{b})\in\mathbb{C}^{G_RG_T \times 1}$. Finally, the sparse \ac{CSI} estimation model of the \ac{THz} \ac{MIMO} system can be obtained via substitution of \eqref{eq: beamspacechannel_vector} into \eqref{eq: equivalent_system_model}, yielding
\begin{align}
\mathbf y = \widetilde{\mathbf{\Phi}} \mathbf{h}_{b} + \mathbf v, \label{eq: CS_eq_model}
\end{align}
where $\widetilde{\mathbf{\Phi}} = \mathbf{\Phi} \mathbf{\Psi} \in \mathbb{C}^{M_TM_R \times G_RG_T}$ represents the equivalent sensing matrix, whereas $\mathbf{\Psi}= \left[ \mathbf A^{*}_T (\Phi_T ) \boldsymbol \otimes \mathbf A_R(\Phi_R) \right] \in \mathbb{C}^{N_RN_T \times G_RG_T}$ represents the \textit{sparsifying-dictionary}. Alternatively, one can express the equivalent sensing matrix $\widetilde{\mathbf{\Phi}}$ as
\begin{align}
    \widetilde{\mathbf \Phi} = \left[ \left(\mathbf{X}_{p}^T \mathbf{F}_{\text{RF}}^T \mathbf A^{*}_T (\Phi_T)\right)  \boldsymbol \otimes \left( \mathbf W^H_{\text{BB}} \mathbf W^H_{\text{RF}} \mathbf A_R(\Phi_R)\right) \right]. \label{eq:eq_sensing_mat}
\end{align}
It can be readily observed that the equivalent sensing matrix $\widetilde{\mathbf \Phi}$ depends on the choice of the RF TPC $\mathbf{F}_{\text{RF}}$, of the RF RC $\mathbf{W}_{\text{RF}}$, of the baseband RC $\mathbf{W}_{\text{BB}}$ and of the pilot matrix $\mathbf{X}_p$ employed for estimating the channel. Therefore, minimizing the total coherence
% \footnote{The total coherence of $\widetilde{\mathbf \Phi}$, denoted as $\mu^t\left(\widetilde{\mathbf \Phi}\right)$, is defined as $\mu^t\left(\widetilde{\mathbf \Phi}\right) = \sum_{i = 1}^{G_RG_T}\sum_{j=1, j \neq i}^{G_RG_T} \left\vert \widetilde{\mathbf \Phi}_{i}^H  \widetilde{\mathbf \Phi}_{j} \right\vert^2$, where the quantities $\widetilde{\mathbf \Phi}_{i}$ and $\widetilde{\mathbf \Phi}_{j}$ represent the $i$th and $j$th columns, respectively, of the equivalent sensing matrix $\widetilde{\mathbf \Phi}$.} 
\cite{elad2007optimized, li2013projection} of the matrix $\widetilde{\mathbf \Phi}$ can lead to significantly improved sparse signal estimation. We now derive the optimal pilot matrix $\mathbf{X}_{p}$ and the baseband RC $\mathbf{W}_{\text{BB}}$, which achieve this.
\begin{lem}\label{lem:lemma1}
Let us set the RF TPC and RC to the normalized discrete Fourier transform (DFT) matrices as follows: $\mathbf{F}_{\text{RF}}\mathbf{F}_{\text{RF}}^H=\mathbf{F}_{\text{RF}}^H\mathbf{F}_{\text{RF}}=\mathbf{I}_{N_T}$ and $\mathbf{W}_{\text{RF}}\mathbf{W}_{\text{RF}}^H= \mathbf{W}_{\text{RF}}^H\mathbf{W}_{\text{RF}}=\mathbf{I}_{N_R}$. Then the $i$th diagonal block $\mathbf{X}_{p,i}, 1 \leq i \leq N_F$, of the pilot matrix $\mathbf{X}_{p}$ defined in \eqref{eq:X_p}, and $j$th diagonal block $\mathbf{W}_{\text{BB},j}, 1 \leq j \leq N_C$, of the baseband RC $\mathbf{W}_{\text{BB}}$ defined in \eqref{eq:BB_comb}, may be formulated as 
\begin{align}
\mathbf{X}_{p,i} = \mathbf{U} \Big[\mathbf{I}_{{\frac{M_T}{N_F}}} \ \ \mathbf{0}_{{{\frac{M_T}{N_F}}} \times N_{RF}-{{\frac{M_T}{N_F}}}}\Big]^T \mathbf{V}_1^H \text{ and }
\mathbf{W}_{\text{BB},j} = \mathbf{U} \Big[\mathbf{I}_{\frac{M_R}{N_C}}  \ \ \mathbf{0}_{{\frac{M_R}{N_C}} \times N_{RF}-{\frac{M_R}{N_C}}}\Big]^T \mathbf{V}_2^H,
\end{align}
for which the total coherence $\mu^t\left(\widetilde{\mathbf \Phi}\right)$ of the equivalent dictionary matrix $\widetilde{\mathbf \Phi}$ is minimized, where the matrices $\mathbf{U}, \mathbf{V}_1 \text{ and } \mathbf{V}_2$ are arbitrary unitary matrices of size $N_{RF} \times N_{RF}, {\frac{M_T}{N_F}} \times {\frac{M_T}{N_F}}$ and ${\frac{M_R}{N_C}} \times {\frac{M_R}{N_C}}$, respectively.
\label{lem:sensing_mat_lemma}
\end{lem}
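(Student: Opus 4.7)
The plan is to exploit the Kronecker product structure in \eqref{eq:eq_sensing_mat} to decouple the minimization of the total coherence into independent subproblems over the pilot matrix $\mathbf{X}_p$ and the baseband combiner $\mathbf{W}_{\text{BB}}$, and then to identify the proposed SVD parametrization as the most general solution of each subproblem.

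First, I would rewrite the total coherence as
\begin{align}
\mu^t(\widetilde{\mathbf{\Phi}}) = \|\widetilde{\mathbf{\Phi}}^H \widetilde{\mathbf{\Phi}}\|_F^2 - \sum_{i=1}^{G_T G_R}\|\widetilde{\mathbf{\Phi}}_i\|_2^4,
\end{align}
and apply the mixed-product identity $(P\otimes Q)^H(P\otimes Q)=(P^HP)\otimes(Q^HQ)$ to \eqref{eq:eq_sensing_mat}. This yields $\widetilde{\mathbf{\Phi}}^H\widetilde{\mathbf{\Phi}} = \mathbf{G}_T \otimes \mathbf{G}_R$, where $\mathbf{G}_T = \mathbf{A}_T^T\mathbf{F}_{\text{RF}}^*(\mathbf{X}_p\mathbf{X}_p^H)^*\mathbf{F}_{\text{RF}}^T\mathbf{A}_T^*$ depends only on the transmit-side design and $\mathbf{G}_R = \mathbf{A}_R^H\mathbf{W}_{\text{RF}}(\mathbf{W}_{\text{BB}}\mathbf{W}_{\text{BB}}^H)\mathbf{W}_{\text{RF}}^H\mathbf{A}_R$ only on the receive-side design. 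Consequently $\|\widetilde{\mathbf{\Phi}}^H\widetilde{\mathbf{\Phi}}\|_F^2 = \|\mathbf{G}_T\|_F^2\|\mathbf{G}_R\|_F^2$ and $\sum_i\|\widetilde{\mathbf{\Phi}}_i\|_2^4 = \|\mathrm{diag}(\mathbf{G}_T)\|_2^2\,\|\mathrm{diag}(\mathbf{G}_R)\|_2^2$. Under the DFT RF precoders $\mathbf{F}_{\text{RF}}^H\mathbf{F}_{\text{RF}}=\mathbf{I}_{N_T}$ and $\mathbf{W}_{\text{RF}}^H\mathbf{W}_{\text{RF}}=\mathbf{I}_{N_R}$, together with the semi-unitarity of $\mathbf{A}_T$ and $\mathbf{A}_R$ in \eqref{eq:sem _unit}, the diagonals of $\mathbf{G}_T$ and $\mathbf{G}_R$ become constant and are fully determined by the training-energy budgets $\mathrm{tr}(\mathbf{X}_p\mathbf{X}_p^H)=M_T$ and $\mathrm{tr}(\mathbf{W}_{\text{BB}}\mathbf{W}_{\text{BB}}^H)=M_R$, respectively; the second term above is then a constant independent of the remaining design freedom, so the problem reduces to independently minimizing $\|\mathbf{G}_T\|_F^2$ and $\|\mathbf{G}_R\|_F^2$.

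Second, for the transmit subproblem, I would exploit the block-diagonal structure $\mathbf{X}_p=\mathrm{blkdiag}(\mathbf{X}_{p,i})$ from \eqref{eq:X_p} to show that $\|\mathbf{G}_T\|_F^2$ reduces, via the Frobenius isometry of the unitary $\mathbf{F}_{\text{RF}}$ and the scaling factor $G_T/N_T$ inherited from \eqref{eq:sem _unit}, to a fixed positive multiple of $\sum_{i=1}^{N_F}\|\mathbf{X}_{p,i}\mathbf{X}_{p,i}^H\|_F^2 = \sum_{i,k}\sigma_k(\mathbf{X}_{p,i})^4$. A power-mean (equivalently, Schur-convexity) argument on the singular values, subject to the per-frame energy constraint $\sum_k\sigma_k(\mathbf{X}_{p,i})^2 = M_T/N_F$, shows that this quartic sum is minimized iff all nonzero singular values equal one, i.e.\ iff each block is semi-unitary with $\mathbf{X}_{p,i}^H\mathbf{X}_{p,i}=\mathbf{I}_{M_T/N_F}$. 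The most general $N_{\text{RF}}\times(M_T/N_F)$ semi-unitary matrix admits the truncated-SVD parametrization $\mathbf{X}_{p,i}=\mathbf{U}[\mathbf{I}_{M_T/N_F} \ \ \mathbf{0}]^T\mathbf{V}_1^H$ claimed in the lemma. The identical chain of reasoning applied to $\mathbf{W}_{\text{BB},j}$ delivers the companion formula.

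The main obstacle is the chain of identities used to transport the minimization from $\|\mathbf{X}_{p,i}\mathbf{X}_{p,i}^H\|_F^2$ to $\|\mathbf{G}_T\|_F^2$: while $\mathbf{F}_{\text{RF}}$ is unitary and therefore norm-preserving, the dictionary $\mathbf{A}_T$ is only semi-unitary in one direction, so one must carefully track how the block projector $\mathbf{X}_p\mathbf{X}_p^H$, after conjugation and compression by $\mathbf{F}_{\text{RF}}^T\mathbf{A}_T^*$, inherits the Frobenius minimization. Writing out the SVD of $\mathbf{F}_{\text{RF}}^T\mathbf{A}_T^*$ and invoking \eqref{eq:sem _unit} reduces this to a direct computation, but doing so cleanly for arbitrary admissible $\mathbf{U},\mathbf{V}_1$, while simultaneously verifying that the diagonal-constancy premise used to neutralize $\sum_i\|\widetilde{\mathbf{\Phi}}_i\|_2^4$ is indeed enforced by the semi-unitary choice, is the part requiring the most care.
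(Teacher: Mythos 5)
There is a genuine gap in the first step, precisely where you try to make the argument exact. Your reduction of $\mu^t(\widetilde{\mathbf{\Phi}})=\|\widetilde{\mathbf{\Phi}}^H\widetilde{\mathbf{\Phi}}\|_F^2-\sum_i\|\widetilde{\mathbf{\Phi}}_i\|_2^4$ to the independent minimization of $\|\mathbf{G}_T\|_F^2$ and $\|\mathbf{G}_R\|_F^2$ hinges on the claim that the diagonals of $\mathbf{G}_T$ and $\mathbf{G}_R$ are constant and fixed by the training-energy budget once the RF stages are DFT matrices and the dictionaries are semi-unitary. That claim is not justified and is false in general: the $k$th diagonal entry of $\mathbf{G}_T$ is a quadratic form of the (conjugated) steering vector $\mathbf{a}_t(\phi_k)$ against $\mathbf{F}_{\text{RF}}^{*}(\mathbf{X}_p\mathbf{X}_p^H)^{*}\mathbf{F}_{\text{RF}}^{T}$, and since $M_T<N_T$ the matrix $\mathbf{X}_p\mathbf{X}_p^H$ is rank-deficient, hence never a scaled identity; the semi-unitary property \eqref{eq:sem _unit} only pins down $\mathrm{tr}(\mathbf{G}_T)$ (the sum of the diagonal entries), not $\|\mathrm{diag}(\mathbf{G}_T)\|_2^2$. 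So the term $\sum_i\|\widetilde{\mathbf{\Phi}}_i\|_2^4$ varies over the feasible set and cannot be discarded as a constant. Nor can you repair this by verifying constancy at the proposed optimum, as you suggest at the end: even with semi-unitary blocks, $\mathbf{X}_p\mathbf{X}_p^H$ is a rank-$M_T$ projector whose beam-domain diagonal still depends on $\mathbf{U}$, so the exact objective does not reduce to $\|\mathbf{G}_T\|_F^2\|\mathbf{G}_R\|_F^2$ minus a constant.

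The paper sidesteps exactly this difficulty: it does not treat the diagonal term at all, but instead upper-bounds the total coherence by adding it back, $\mu^t(\widetilde{\mathbf{\Phi}})\le\|\widetilde{\mathbf{\Phi}}\widetilde{\mathbf{\Phi}}^H\|_F^2$, and minimizes this bound (so, strictly, the lemma is established in the sense of minimizing that surrogate). From that point on your argument coincides with the paper's: the Kronecker factorization gives the product $\|\widetilde{\mathbf{F}}\widetilde{\mathbf{F}}^H\|_F^2\,\|\widetilde{\mathbf{W}}\widetilde{\mathbf{W}}^H\|_F^2$, the DFT/unitary RF stages and the semi-unitary dictionaries collapse the transmit factor to a fixed multiple of $\|\mathbf{X}_p^T\mathbf{X}_p^{*}\|_F^2$ (and similarly on the receive side), block-diagonality decouples the frames, and the per-block problem $\min\sum_k\sigma_k^4$ subject to $\sum_k\sigma_k^2=M_T/N_F$ forces all singular values to one — the paper via KKT, you via a power-mean/Schur-convexity argument, which is equivalent — yielding the truncated-SVD parametrization with arbitrary unitary $\mathbf{U},\mathbf{V}_1$ (and $\mathbf{V}_2$). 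To make your write-up correct, either adopt the paper's bounding step explicitly, or supply a genuine argument controlling $\sum_i\|\widetilde{\mathbf{\Phi}}_i\|_2^4$; as written, the exactness claim is the missing piece.
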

\begin{proof}
Given in Appendix \ref{appen:proof_lem_1}.
\end{proof}
% For the \ac{THz} \ac{MIMO} channel estimation model of \eqref{eq: CS_eq_model}, one can now formulate the following optimization problem to determine a sparse estimate of the beamspace \ac{CSI} $\mathbf{h}_b$
% \begin{equation}
% \begin{array}{l}
% \min\limits_{\mathbf{h}_b}~ \left\Vert \mathbf{h}_b \right\Vert_0  \\ 
% \text{s.t.} ~~\left\Vert \mathbf{y} - \widetilde{\mathbf{\Phi}} \mathbf{h}_b \right\Vert_2^2 \ \leq \ \epsilon_{\text{o}},
% \end{array}
% \label{eq:sparse_opt_p}
% \end{equation} 
% where $\left\Vert \cdot \right\Vert_0$ denotes the $l_0$-norm, which is defined as the number of non-zero elements in the vector \cite{wipf2004sparse}, and $\epsilon_{\text{o}}$ represents a tunable parameter, which is used to adjust the observation error. Note that the objective of \eqref{eq:sparse_opt_p} is  non-convex \cite{eldar2012compressed}, which can hence not be solved using the standard optimization techniques. 
The next subsection develops an \ac{OMP}-based procedure for acquiring a sparse estimate of the \ac{THz} \ac{MIMO} channel exploiting the model of \eqref{eq: CS_eq_model}. 
\begin{algorithm}[t!]
\caption{\ac{OMP}-based sparse channel estimation for \ac{THz} \ac{MIMO} systems}\label{algo1}
\textbf{Input:} Equivalent sensing matrix $\widetilde{\mathbf{\Phi}}$, pilot output $\mathbf{y}$, array response dictionary matrices $\mathbf A_{R} (\Phi_{R})$ and $\mathbf A_{T}(\Phi_{T} )$, stopping parameter $\epsilon_t$\\
\textbf{Initialization:} Index set $\mathcal{I} = [\ ]$, $\widetilde{\mathbf{\Phi}}^{\mathcal{I}} = [\ ]$, residue vectors $\mathbf{r}_{-1} = \mathbf{0}_{M_TM_R \times 1},$ $\mathbf{r}_{0} = \mathbf{y} $, $\widehat{\mathbf{h}}_{b,\text{OMP}} = \mathbf{0}_{G_RG_T \times 1}$, counter $i = 0$\\
\textbf{while} $\left( \left\vert \parallel \mathbf{r}_{i-1} \parallel_2^2 - \parallel \mathbf{r}_{i} \parallel_2^2 \right\vert \ \geq \ \epsilon_t \right)$ \textbf{do}
\begin{enumerate}
\item $i \leftarrow i+1$
\item $j = \underset{k = 1,\cdots,G_TG_R}{\mathrm{arg\  max}} \big\vert \mathbf{r}_{i-1}^H\widetilde{\mathbf{\Phi}}(:,k)\big\vert$
\item $\mathcal{I} = \mathcal{I} \cup {j}$
\item $\widetilde{\mathbf{\Phi}}^{\mathcal{I}} = \widetilde{\mathbf{\Phi}}(:,\mathcal{I})$
\item $ \widehat{\mathbf{h}}^{i}_{\text {LS}}  = \left( \widetilde{\mathbf{\Phi}}^{\mathcal{I}} \right)^{\dagger} \mathbf{y}$
\item $\mathbf{r}_{i} = \mathbf{y} - \widetilde{\mathbf{\Phi}}^{\mathcal{I}}  \widehat{\mathbf{h}}^{i}_{\text {LS}} $
\end{enumerate}
\textbf{end while}\\
$\widehat{\mathbf{h}}_{b,\text{OMP}} \left( \mathcal{I} \right)  =  \widehat{\mathbf{h}}^{i}_{\text {LS}} $\\
\textbf{Output:} $\widehat{\mathbf{H}}_{\text{OMP}} = \mathbf A_{R} (\Phi_{R}) \mathrm{vec}^{-1}\left( \widehat{\mathbf{h}}_{b,\text{OMP}} \right) \mathbf A^H_{T}(\Phi_{T} )$
\end{algorithm}
\subsection{OMP-Based Sparse Channel Estimation in \ac{THz} Hybrid \ac{MIMO} Systems} \label{Sparse_CES}
The \ac{OMP}-based sparse channel estimation technique is summarized in Algorithm \ref{algo1} and its key steps are described next. Step-2 of each iteration identifies the specific column of the sensing matrix $\widetilde{\mathbf{\Phi}}$ that is maximally correlated with the residue $\mathbf{r}_{i-1}$ obtained in iteration $i-1$. Step-3 updates the index-set $\mathcal{I}$ by including the index $j$ obtained in Step-2. Subsequently, Step-4 determines the submatrix $\widetilde{\mathbf{\Phi}}^{\mathcal{I}}$ of the sensing matrix $\widetilde{\mathbf{\Phi}}$ using the specific columns, which are indexed by the set $\mathcal{I}$. Step-5 obtains the intermediate \ac{LS} solution $\widehat{\mathbf{h}}_{\text{LS}}^i$, while the associated residue vector $\mathbf{r}_i$ is computed using  $\widetilde{\mathbf{\Phi}}^{\mathcal{I}}$ in Step-6. These steps are repeated iteratively until the difference between the subsequent residuals becomes sufficiently small, i.e., $\left\vert \parallel \mathbf{r}_{i-1} \parallel_2^2 - \parallel \mathbf{r}_{i} \parallel_2^2 \right\vert \ < \ \epsilon_t$, where $\epsilon_t$ is a suitably chosen threshold. Finally, the \ac{OMP}-based estimate $\widehat{\mathbf{H}}_{\text{OMP}}$ of the \ac{THz} \ac{MIMO} channel using its beamspace estimate $\widehat{\mathbf{h}}_{b,\text{OMP}}$ is determined as $\widehat{\mathbf{H}}_{\text{OMP}} = \mathbf A_{R} (\Phi_{R}) \mathrm{vec}^{-1}\left( \widehat{\mathbf{h}}_{b,\text{OMP}} \right) \mathbf A^H_{T}(\Phi_{T} ).$ The key benefits of the proposed \ac{OMP} algorithm are its sparsity inducing nature and low computational cost. However, note that the choice of the stopping parameter $\epsilon_t$ plays a vital role in defining the convergence behavior of the \ac{OMP} algorithm, which renders the performance uncertain. Furthermore, this technique is susceptible to error propagation due to its greedy nature, since an errant selection of the index in a particular iteration cannot be corrected in the subsequent iterations. In view of these shortcomings of the OMP-based approach, the next subsection develops an efficient \ac{BL}-based framework, which leads to a significantly improved estimation accuracy of the \ac{THz} \ac{MIMO} channel.
\section{\ac{BL}-based Sparse Channel Estimation in \ac{THz} \ac{MIMO} Systems}\label{sec:BL_based_est} 
The proposed \ac{BL}-based sparse channel estimation technique relies on the Bayesian philosophy, which is especially well-suited for an under-determined system, where $M_TM_R << G_TG_R$. A brief outline of this procedure is as follows. The \ac{BL} procedure commences by assigning a parameterized Gaussian prior $f(\mathbf h_{b}; \mathbf{\boldsymbol\Gamma})$ to the sparse beamspace \ac{CSI} vector ${\mathbf{h}}_{b}$. The associated hyperparameter matrix $\mathbf{\boldsymbol\Gamma}$ is subsequently estimated by maximizing the Bayesian evidence $f(\mathbf{y}; \mathbf \Gamma)$. Finally, the \ac{MMSE} estimate of the beamspace channel is obtained using the estimated hyperparameter matrix $\widehat{\mathbf{\boldsymbol\Gamma}}$, which leads to an improved sparse channel estimate. The various steps are described in detail below.

Consider the parameterized Gaussian prior assigned to ${\mathbf{h}}_{b}$ as shown below \cite{wipf2004sparse}
\begin{align}
f(\mathbf h_{b}; \mathbf \Gamma)=\prod_{i=1}^{G_RG_T}(\pi \gamma_i)^{-1} \exp\Bigg(-\displaystyle\frac{\vert \mathbf h_{b}(i)\vert^2}{\gamma_i}\Bigg), \label{eq: gaussian_prior1}
\end{align}
where the matrix $\mathbf{\Gamma} {=} \mathrm{diag}\left(\gamma_1, \gamma_2, \cdots, \gamma_{G_R G_T}\right) {\in} \mathbb{R}^{G_RG_T \times G_R G_T}$ comprises the hyperparameters $\gamma_i, \: 1 \leq i \leq G_R G_T$. Note that the \ac{MMSE} estimate ${\widehat{\mathbf{h}}}_{b}$ corresponding to the sparse estimation model in \eqref{eq: CS_eq_model} is given by \cite{kay1993fundamentals}
\begin{align}
{\widehat{\mathbf{h}}}_{b} = \left( \widetilde{\mathbf{\Phi}}^H \mathbf{R}_{v}^{-1} \widetilde{\mathbf{\Phi}} + \mathbf{\Gamma}^{-1}\right)^{-1} \widetilde{\mathbf{\Phi}}^H \mathbf{R}_{v}^{-1} {\mathbf{y}}, \label{eq: mmse_est}
\end{align}
which can be readily seen to depend on the hyperparameter matrix $\mathbf{\Gamma}$. Therefore, the estimation of $\mathbf{\Gamma}$ holds the key for eventually arriving at a reliable sparse estimate of the beamspace channel vector ${\mathbf{h}}_{b}$. In order to achieve this, consider the log-likelihood function $\log\left[f(\mathbf{y};\mathbf \Gamma)\right]$ of the hyperparameter matrix $\mathbf\Gamma$, which can be formulated as $\log\left[ f(\mathbf{y};\boldsymbol\Gamma)\right] = {c}_1 - \log \left[\det\left(\mathbf{R}_{y}\right)\right] -  \mathbf y^H \mathbf{R}_{y}^{-1} \mathbf y,$ where we have ${c}_1 = -M_TM_R\log(\pi)$ and the matrix $\mathbf{R}_{y} = \mathbf{R}_{v} +  \widetilde{\mathbf{\Phi}} {\mathbf{\Gamma}} \widetilde{\mathbf{\Phi}}^H \in \mathbb{C}^{M_T M_R \times M_T M_R}$ represents the covariance matrix of the pilot output $\mathbf y$. It follows from \cite{wipf2004sparse} that maximization of the log-likelihood $\log\left[ f(\mathbf{y};\boldsymbol\Gamma)\right]$ with respect to $\mathbf{\Gamma}$ is non-concave, which renders its direct maximization intractable. Therefore, in such cases, the \ac{EM} technique is eminently suited for iterative maximization of the log-likelihood function, with guaranteed convergence to a local optimum \cite{kay1993fundamentals}. Let $\widehat{\gamma}_i^{(j-1)}$ denote the estimate of the $i$th hyperparameter obtained from the \ac{EM} iteration $(j-1)$ and let $\widehat{\mathbf{\Gamma}}^{(j-1)}$ denote the hyperparameter matrix defined as $\widehat{\mathbf{\Gamma}}^{(j-1)} = \mathrm{diag}\left( \widehat{\gamma}_1^{(j-1)}, \widehat{\gamma}_2^{(j-1)}, \cdots, \widehat{\gamma}_{G_RG_T}^{(j-1)} \right)$. The procedure of updating the estimate $\widehat{\mathbf{\Gamma}}^{(j)}$  in the $j$th \ac{EM}-iteration is described in Lemma \ref{lem:lemma2} below. 
\begin{lem} \label{lem:lemma2}
Given the $i$th hyperparameter $\widehat{\gamma}_i^{(j-1)}$, the update $\widehat{\gamma}_i^{(j)}$ in the $j$th \ac{EM}-iteration, which maximizes the log-likelihood function  $\mathbf{\mathcal{L}}\left(\mathbf{\Gamma}| \widehat{\mathbf{\Gamma}}^{(j-1)}\right)=\mathbb{E}_{{{\mathbf{h}}_{b}\vert{\mathbf{y}}; \widehat{\mathbf{\Gamma}}^{(j-1)}}} \left\{\log f({ \mathbf{y}}, {\mathbf{h}}_{b}; {\mathbf{\Gamma}} )\right\},$ is given by
\begin{align}
\widehat{\gamma}_i^{(j)} = \mathbf{R}_{b}^{(j)}(i,i) + \left\vert \boldsymbol{\mu}_{b}^{(j)}(i) \right\vert^2, \label{eq:hyp_comp}
\end{align}
where $\boldsymbol{\mu}_{b}^{(j)}= \mathbf{R}_{b}^{(j)} \widetilde{\mathbf{\Phi}}^H \mathbf{R}_{v}^{-1} {\mathbf{y}} \in \mathbb{C}^{G_R G_T \times 1}$ and  $\mathbf{R}_{b}^{(j)} = \left[ \widetilde{\mathbf{\Phi}}^H \mathbf{R}_{v}^{-1} \widetilde{\mathbf{\Phi}} + \left(\widehat{\mathbf{\Gamma}}^{(j-1)}\right)^{-1} \right]^{-1} \in \mathbb{C}^{G_R G_T \times G_R G_T}.$
\end{lem}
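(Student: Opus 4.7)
The plan is to follow the standard Expectation-Maximization recipe, where the latent variable is the sparse beamspace channel $\mathbf{h}_b$ and the parameter to be estimated is the hyperparameter matrix $\boldsymbol{\Gamma}$. First I would write the complete-data log-likelihood as
\begin{align}
\log f(\mathbf{y}, \mathbf{h}_b; \boldsymbol{\Gamma}) = \log f(\mathbf{y}\,|\,\mathbf{h}_b) + \log f(\mathbf{h}_b; \boldsymbol{\Gamma}),
\end{align}
and observe that the first term does not depend on $\boldsymbol{\Gamma}$. Hence, when forming the E-step surrogate $\mathcal{L}(\boldsymbol{\Gamma}\,|\,\widehat{\boldsymbol{\Gamma}}^{(j-1)})$, only the prior term survives the differentiation, yielding, up to an additive constant,
\begin{align}
\mathcal{L}(\boldsymbol{\Gamma}\,|\,\widehat{\boldsymbol{\Gamma}}^{(j-1)}) = -\sum_{i=1}^{G_R G_T}\!\left[\log \gamma_i + \frac{\mathbb{E}_{\mathbf{h}_b\,|\,\mathbf{y};\widehat{\boldsymbol{\Gamma}}^{(j-1)}}\!\left\{|\mathbf{h}_b(i)|^2\right\}}{\gamma_i}\right].
\end{align}

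Next I would characterize the posterior $f(\mathbf{h}_b\,|\,\mathbf{y};\widehat{\boldsymbol{\Gamma}}^{(j-1)})$ required in the E-step. Since $\mathbf{y}\,|\,\mathbf{h}_b \sim \mathcal{CN}(\widetilde{\boldsymbol{\Phi}}\mathbf{h}_b,\mathbf{R}_v)$ and the parameterized prior in \eqref{eq: gaussian_prior1} is $\mathbf{h}_b \sim \mathcal{CN}(\mathbf{0},\widehat{\boldsymbol{\Gamma}}^{(j-1)})$, standard Gaussian conjugacy directly produces a complex Gaussian posterior with mean $\boldsymbol{\mu}_b^{(j)}$ and covariance $\mathbf{R}_b^{(j)}$ exactly as defined in the statement of the lemma. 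Consequently, the required second moment can be evaluated componentwise as
\begin{align}
\mathbb{E}\!\left\{|\mathbf{h}_b(i)|^2\right\} = \mathbf{R}_b^{(j)}(i,i) + \bigl|\boldsymbol{\mu}_b^{(j)}(i)\bigr|^2.
\end{align}

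Finally, the M-step reduces to a decoupled scalar maximization, since $\mathcal{L}$ separates additively across the indices $i$. Differentiating with respect to $\gamma_i$ and setting the derivative to zero gives $-\gamma_i^{-1} + \gamma_i^{-2}\,\mathbb{E}\{|\mathbf{h}_b(i)|^2\} = 0$, whose unique positive root is precisely the claimed update in \eqref{eq:hyp_comp}. A brief check of the second-order condition (the objective is strictly concave in $\gamma_i$ on $\mathbb{R}_{++}$) confirms that this stationary point is the maximizer.

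I do not anticipate a serious obstacle here: the main subtlety is being careful that the E-step expectation is taken under the posterior induced by the \emph{previous} iterate $\widehat{\boldsymbol{\Gamma}}^{(j-1)}$ (so that $\mathbf{R}_b^{(j)}$ uses $\widehat{\boldsymbol{\Gamma}}^{(j-1)}$, not $\boldsymbol{\Gamma}$), and that the decoupling of the M-step objective is exploited to avoid any nontrivial joint optimization. The conjugate Gaussian-Gaussian structure ensures every step is in closed form.
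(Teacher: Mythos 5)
Your proposal is correct and follows essentially the same route as the paper's own proof in Appendix B: decompose the complete-data log-likelihood, discard the $\boldsymbol{\Gamma}$-independent conditional term, invoke the Gaussian posterior with mean $\boldsymbol{\mu}_b^{(j)}$ and covariance $\mathbf{R}_b^{(j)}$ to evaluate the E-step second moment, and decouple the M-step into scalar problems whose stationary points give \eqref{eq:hyp_comp}. One minor quibble: the map $\gamma_i \mapsto -\log\gamma_i - c/\gamma_i$ with $c=\mathbb{E}\{|\mathbf{h}_b(i)|^2\}$ is not strictly concave on all of $\mathbb{R}_{++}$ (its second derivative changes sign at $\gamma_i=2c$), but since its first derivative $(c-\gamma_i)/\gamma_i^2$ is positive below and negative above $\gamma_i=c$, the stationary point is still the unique global maximizer, so your conclusion stands.
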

\begin{proof}
Given in Appendix \ref{appen:proof_lem_2}.
\end{proof}
The \ac{BL} algorithm of \ac{THz} \ac{MIMO} \ac{CSI} estimation is summarized in Algorithm \ref{algo2}. The \ac{EM} procedure is repeated until the estimates of the  hyperparameters converge, i.e. the quantity $\left\Vert \widehat{\boldsymbol{\Gamma}}^{(j)}-\widehat{\boldsymbol{\Gamma}}^{(j-1)} \right\Vert^{2}_{F}$ becomes smaller than a suitably chosen threshold $\epsilon$  or the number of iterations reaches a maximum limit $K_{\text{max}}$. The \ac{BL}-based estimate $\widehat{\mathbf h}_{b,\text{BL}}$, upon convergence of the \ac{EM} procedure, is given by $\widehat{\mathbf h}_{b,\text{BL}} = \boldsymbol{\mu}_{b}^{(j)}.$
\begin{algorithm}[t!]
\caption{\ac{BL}-based sparse channel estimation for \ac{THz} \ac{MIMO} systems}\label{algo2}
\textbf{Input:} Pilot output $\mathbf{y} $, equivalent sensing matrix $\widetilde{\mathbf{\Phi}}$, noise covariance $\mathbf{R}_{v}$, array response dictionary matrices $\mathbf A_{R} (\Phi_{R})$ and $\mathbf A_{T}(\Phi_{T} )$, stopping parameters $\epsilon \text{\ and\ } K_{\text{max}}$\\
\textbf{Initialization:} $\widehat{{\gamma}}_i^{(0)} =1$, for $1\leq i\leq G_RG_T \Rightarrow \widehat{\mathbf{\Gamma}}^{(0)} = \mathbf{I}_{G_RG_T}$, $\widehat{\mathbf{\Gamma}}^{(-1)} = \mathbf{0}$ and
 counter $j = 0$ \\
\textbf{while} $\left(\left\Vert \widehat{\boldsymbol\Gamma}^{(j)} - \widehat{\boldsymbol\Gamma}^{(j-1)}\right\Vert_F > \epsilon \text{\ \ and\ \ } j < K_{\text{max}}\right)$ \textbf{do}
\begin{itemize}
\item[1)] $j \leftarrow j+1$;
\item[2)] \textbf{E-step:} Compute the \textit{a posteriori} covariance and mean  
\begin{align*}
\mathbf{R}_{b}^{(j)} = \left[ \widetilde{\mathbf{\Phi}}^H \mathbf{R}_{v}^{-1} \widetilde{\mathbf{\Phi}} + \left(\widehat{\mathbf{\Gamma}}^{(j-1)}\right)^{-1} \right]^{-1} \text{ and }
\boldsymbol{\mu}_{b}^{(j)} = \mathbf{R}_{b}^{(j)} \widetilde{\mathbf{\Phi}}^H \mathbf{R}_{v}^{-1} {\mathbf{y}};
\end{align*}
\item[3)] \textbf{M-step:} Update the hyperparameters
\begin{itemize}
\item[] \textbf{for} $i=1,\cdots,G_RG_T$ \textbf{do}
\begin{align*}
\widehat{\gamma}_i^{(j)} = \mathbf{R}_{b}^{(j)}(i,i) + \left\vert \boldsymbol{\mu}_{b}^{(j)}(i) \right\vert^2
\end{align*}
\textbf{end for}
\end{itemize}
\end{itemize}
\textbf{end while}\\
\label{algo:M-SIP_algo}
 $\widehat{\mathbf h}_{b,\text{BL}}=\boldsymbol{\mu}_{b}^{(j)}$\\
\textbf{Output:}
$\widehat{\mathbf{H}}_{\text{BL}} = \mathbf A_{R} (\Phi_{R}) \mathrm{vec}^{-1}\left( \widehat{\mathbf{h}}_{b,\text{BL}} \right) \mathbf A^H_{T}(\Phi_{T} ).$
\end{algorithm}
\subsection{Multiple Measurement Vector (MMV)-BL (MBL)} 
Let us now consider a scenario with multiple pilot outputs, denoted by $\mathbf{y}_{m}, 1 \leq m \leq M$, which are obtained by transmitting an identical pilot matrix $\mathbf{X}_{p}$. The output $\mathbf{y}_{m}$ received from the $m$th pilot-block transmission can be formulated as $\mathbf{y}_{m} = \widetilde{\mathbf{\Phi}} \mathbf{h}_{b,m} + \mathbf{v}_{m}$, where $\mathbf{h}_{b,m}$ represents the beamspace channel corresponding to the $m$th pilot-block transmission and $\mathbf{v}_{m}$ is the corresponding noise. Now defining the concatenated matrices $\mathbf{\bar{Y}} = \left[ \mathbf{y}_{1} \cdots, \mathbf{y}_{M} \right] \in \mathbb{C}^{M_TM_R \times M}, \mathbf{\bar{H}}_{b}=\left[\mathbf{h}_{b,1}, \cdots, \mathbf{h}_{b,M} \right] \in \mathbb{C}^{G_RG_T \times M}$ and $\mathbf{\bar{V}}=\left[\mathbf{v}_1, \cdots, \mathbf{v}_M \right]$, the MMV model can be formulated as
\begin{align*}
\mathbf{\bar{Y}} = \widetilde{\mathbf{\Phi}} \mathbf{\bar{H}}_{b} + \mathbf{\bar{V}}. \label{eq:siso_mmv_2}
\end{align*}
Furthermore, considering these $M$ pilot-blocks to be well within the \textit{coherence-time} \cite{tse2005fundamentals}, the non-zero locations of the sparse beamspace domain CSI $\mathbf{h}_{b,m}$ do not change. This results in an interesting \textit{simultaneous-sparse} structure of the resultant CSI matrix $\mathbf{\bar{H}}_{b}$, since its columns $\mathbf{h}_{b,m}$ share an identical sparsity profile. Subsequently, one can derive an MBL framework for efficiently exploiting this simultaneous-sparse structure of the beamspace domain CSI matrix $\mathbf{\bar{H}}_b$, which yields a superior estimates. The update equations of the proposed MBL framework for the $j$th EM iteration is  summarized below \cite{wipf2007empirical}:
\begin{align*}
\mathbf{R}_{b}^{(j)} = \left[ \widetilde{\mathbf{\Phi}}^H \mathbf{R}_{v}^{-1} \widetilde{\mathbf{\Phi}} + \left(\widehat{\mathbf{\Gamma}}^{(j-1)}\right)^{-1} \right]^{-1} &\text{ and } \widehat{\mathbf{\bar{H}}}^{(j)} = \mathbf{R}_{b}^{(j)} \widetilde{\mathbf{\Phi}}^H \mathbf{R}_{v}^{-1} \mathbf{\bar{Y}},\\
\widehat{\gamma}_i^{(j)} = \mathbf{R}_{b}^{(j)}(i,i) + \frac{1}{M} & \sum_{m=1}^M \left\vert \widehat{\mathbf{\bar{H}}}^{(j)}(i,m) \right\vert^2.
\end{align*}
Finally, to benchmark the performance, the \ac{BCRLB} of the \ac{CSI} estimation model of \eqref{eq: CS_eq_model} is derived in the next subsection.
\subsection{\ac{BCRLB} for \ac{THz} \ac{MIMO} Channel Estimation}\label{sec:BCRLB}
The Bayesian \ac{FIM} $\mathbf{J}_{\text{B}} \in \mathbb{C}^{G_RG_T \times G_RG_T}$ can be evaluated as the sum of \acp{FIM} associated with the pilot output  $\mathbf{y}$ and the beamspace \ac{CSI} $\mathbf{h}_{b}$, denoted by $\mathbf{J}_{\text{D}}$ and $\mathbf{J}_{\text{P}}$, respectively. Hence, one can  express the Bayesian \ac{FIM} $\mathbf{J}_{\text{B}}$ as \cite{van2007bayesian}: $\mathbf{J}_{\text{B}}=\mathbf{J}_{\text{D}} + \mathbf{J}_{\text{P}},$ where the matrices $\mathbf{J}_{\text{D}}$ and $\mathbf{J}_{\text{P}}$ are determined as follows. Let the log-likelihoods corresponding to the \ac{THz} \ac{MIMO} beamspace channel $\mathbf{h}_{b}$ and the pilot output vector $\mathbf{y}$ be represented by $\mathcal{L}(\mathbf{h}_{b} ; \boldsymbol \Gamma)$ and $\mathcal{L}(\mathbf{y} {\mid} \mathbf{h}_{b})$, respectively. {These log-likelihoods simplify to  
\begin{align}
\mathcal{L}(\mathbf{y} {\mid} \mathbf{h}_{b}) &= \log \left[ f \left(\mathbf{y} {\mid} \mathbf{h}_{b}\right)\right]=c_2-\left(\mathbf{y}-\mathbf{\widetilde{\Phi}}{\mathbf{h}}_{b}\right)^H\mathbf{R}_{v}^{-1}  \left(\mathbf{y}-\mathbf{\widetilde{\Phi}}{\mathbf{h}}_{b}\right)\\
&= c_2 - \mathbf{y}^H \mathbf{R}_{v}^{-1} \mathbf{y} + \mathbf{h}_{b}^H \mathbf{\widetilde{\Phi}}^H \mathbf{R}_{v}^{-1} \mathbf{y} + \mathbf{y}^H \mathbf{R}_{v}^{-1} \mathbf{\widetilde{\Phi}}{\mathbf{h}}_{b} - \mathbf{h}_{b}^H \mathbf{\widetilde{\Phi}}^H \mathbf{R}_{v}^{-1} \mathbf{\widetilde{\Phi}} {\mathbf{h}}_{b}, \label{eq:log_likeli_simplified}\\
\mathcal{L}(\mathbf{h}_{b} ; \boldsymbol \Gamma) &= \log \left[ f(\mathbf{h}_{b} ; \boldsymbol \Gamma)\right] =  c_3 - {\mathbf{h}}_{b}^H {\boldsymbol\Gamma^{-1}} {\mathbf{h}}_{b},
\end{align}
where the terms $c_2 = -M_TM_R\log\left(\pi\right) -\log\left[\det(\mathbf{R}_{v}) \right]$ and $c_3 = -G_RG_T\log\left(\pi\right)-\log\left[\det\left(\boldsymbol \Gamma\right)\right]$ are constants that do not depend on the beamspace channel $\mathbf{h}_b$. The \acp{FIM} $\mathbf{J}_{\text{D}}$ and $\mathbf{J}_{\text{P}}$ expressed in terms of these log-likelihoods are defined as \cite{van2007bayesian}
\begin{align}
\mathbf{J}_{\text{D}} = {-\mathbb{E}_{\mathbf{y}, \mathbf{h}_{b}} \Bigg\{ \frac{\partial^2 \mathcal{L}(\mathbf{y} {\mid} \mathbf{h}_{b})}{\partial{\mathbf{h}}_{b}\partial{\mathbf{h}}_{b}^H} \Bigg\}},  \ \ \    
\mathbf{J}_{\text{P}} = {-\mathbb{E}_{{\mathbf{h}}_{b}} \Bigg\{ \frac{\partial^2  \mathcal{L}(\mathbf{h}_{b} ; \boldsymbol \Gamma) }{\partial{\mathbf{h}}_{b}\partial{\mathbf{h}}_{b}^H} \Bigg\}}.
\end{align}
The quantity $\mathbf{J}_{\text{D}}$ simplifies to $\mathbf{J}_{\text{D}} = \widetilde{\mathbf{\Phi}}^H \mathbf{R}_{v}^{-1} \widetilde{\mathbf{\Phi}}$, since the $(i,j)$th element of the Hessian matrix $\frac{\partial^2 \mathcal{L}(\mathbf{y} {\mid} \mathbf{h}_{b})}{\partial{\mathbf{h}}_{b}\partial{\mathbf{h}}_{b}^H}$ evaluated as $\frac{\partial^2 \mathcal{L}(\mathbf{y} {\mid} \mathbf{h}_{b})}{\partial{\mathbf{h}}_{b,i}\partial{\mathbf{h}}_{b,j}}$ becomes zero for the initial four terms of \eqref{eq:log_likeli_simplified}. As for the last term, the Hessian matrix evaluates to $\widetilde{\mathbf{\Phi}}^H \mathbf{R}_{v}^{-1} \widetilde{\mathbf{\Phi}}$. Similarly, the FIM $\mathbf{J}_{\text{P}}$ evalulates to $ \mathbf{J}_{\text{P}} =\boldsymbol{\Gamma}^{-1}$.} Thus, the Bayesian \ac{FIM} $\mathbf{J}_{\text{B}}$ evaluates to $\mathbf{J}_{\text{B}} = \widetilde{\mathbf{\Phi}}^H \mathbf{R}_{v}^{-1} \widetilde{\mathbf{\Phi}} + \boldsymbol{\Gamma}^{-1}.$ Finally, the \ac{MSE} of the estimate $\widehat{\mathbf{h}}_{b}$ can be bounded as 
\begin{align}
\mathrm{MSE}\left(\widehat{\mathbf{h}}_{b}\right) = \mathbb{E} \left\{ \left\Vert \widehat{\mathbf{h}}_{b} - \mathbf{h}_{b} \right\Vert^2 \right\} \geq \mathrm{Tr}\left\{\mathbf{J}_B^{-1}\right\} =\mathrm{Tr}\left\{ \left(\mathbf{\widetilde{\Phi}}^H \mathbf{R}_{v}^{-1} \widetilde{\mathbf{\Phi}} + \boldsymbol{\Gamma}^{-1}\right)^{-1} \right\}.
\end{align}
Furthermore, upon exploiting the relationship between the CSI vector $\mathbf{h}$ and its beamspace representation $\mathbf{h}_b$ given in \eqref{eq: beamspacechannel_vector}, one can express the \ac{BCRLB} for the estimated  \ac{CSI}  $\widehat{\mathbf{H}}$ as $\mathrm{MSE}\left(\widehat{\mathbf{H}}\right) \geq \mathrm{Tr}\left\{\mathbf{\Psi}\mathbf{J}_{\text{B}}^{-1}\mathbf{\Psi}^H\right\}.$
% Till now, we discussed sparse channel estimation procedure for the \ac{THz} hybrid \ac{MIMO} systems using the OMP and BL techniques. The next step using these channel estimates is to design the hybrid precoder and combiner, which maximize the system throughput. Therefore, the next section develops a low-complexity hybrid precoder and combiner design procedure, which directly employs the beamspace domain estimated \ac{CSI} for this purpose.
The next part of this paper presents the hybrid TPC and RC design using the CSI estimates obtained from the \ac{OMP} and \ac{BL} techniques described above.  
\section{Hybrid Transceiver Design for \ac{THz} \ac{MIMO} Systems} \label{prec_comb}
{This treatise develops a novel joint hybrid transceiver design, which directly employs the beamspace channel estimates obtained via the proposed BL-based estimation techniques. Note that the existing mmWave and THz contributions, such as \cite{yuan2018hybrid, alkhateeb2014channel, el2014spatially}, assume the availability of the full CSI for designing the RF precoder $\bar{\mathbf{F}}_{\text{RF}}$ and combiner $\bar{\mathbf{W}}_{\text{RF}}$, which is challenging to obtain due to the large number of antennas and propagation losses. Furthermore, these works either consider the true array response vectors to be perfectly known or employ a codebook for designing the RF precoder/ combiner. To the best of our knowledge, none of the existing papers have directly employed the estimate $\widehat{\mathbf{h}}_{b}$ of the underlying beamspace channel for designing the hybrid precoder, which is naturally the most suitable approach, given the availability of the  beamspace domain channel estimates. The proposed THz hybrid transceiver design addresses this open problem.}
\subsection{Hybrid TPC Design}
The baseband symbol vector $\bar{\mathbf x}$ of \eqref{eq:th_sys_mdl} comprised of i.i.d. symbols has a covariance matrix given by $\mathbf{R}_{\bar{x}}=\mathbb{E}\left\{\bar{\mathbf x} \bar{\mathbf x}^H\right\} = \frac{1}{N_s}\mathbf{I}_{N_S}.$ The transmit signal vector $\widetilde{\mathbf x} \in \mathbb{C}^{N_T \times 1}$ is formulated as $\widetilde{\mathbf x} = \bar{\mathbf F}_\text{RF} \bar{\mathbf{F}}_\text{BB} \bar{\mathbf x}$, while the power constraint on the hybrid TPC is given by $\| \bar{\mathbf F}_\text{RF} \bar{\mathbf F}_\text{BB}\|_F^2 \leq P_TN_S,$ which is equivalent to restricting the total transmit power at the output of the TPC to $P_T$, yielding $\mathbb{E}\left\{ \widetilde{\mathbf x}^H \widetilde{\mathbf x} \right\} = P_T$. To design the optimal TPCs $\bar{\mathbf{F}}_\text{BB}^{\mathrm{opt}} \in \mathbb{C}^{ N_\text{RF} \times N_S }\ \text{  and } \bar{\mathbf{F}}^{\mathrm{opt}}_\text{RF} \in \mathbb{C}^{N_T \times N_\text{RF}}$, one can maximize the mutual information $\mathcal{I}\left( \bar{\mathbf{F}}_\text{BB}, \bar{\mathbf{F}}_\text{RF} \right) = \mathrm{log}_2  \Big| \mathbf{I}_{N_R} + \mathbf{H} \bar{\mathbf{F}}_\text{RF} \bar{\mathbf{F}}_\text{BB} \bar{\mathbf{F}}_\text{BB}^H \bar{\mathbf{F}}_\text{RF}^H \mathbf{H}^H\Big|,$ subject to the power constraint. Thus, the optimization problem of the hybrid TPC can be formulated as
\begin{align}
\left\{\bar{\mathbf{F}}_\text{BB}^{\mathrm{opt}},\bar{\mathbf{F}}_\text{RF}^{\mathrm{opt}}\right\}
&= \underset{\left(\bar{\mathbf F}_\text{BB}, \bar{\mathbf F}_\text{RF}\right)}{{\mathrm{arg\text{ }  max}}} \ \mathrm{log}_2  \Big| \mathbf{I}_{N_R} + \mathbf{H} \bar{\mathbf{F}}_\text{RF} \bar{\mathbf{F}}_\text{BB} \bar{\mathbf{F}}_\text{BB}^H \bar{\mathbf{F}}_\text{RF}^H \mathbf{H}^H\Big|, \nonumber \\
& \hspace{-10pt} \mathrm{s. t.} \quad \|\bar{\mathbf{F}}_\text{RF} \bar{\mathbf{F}}_\text{BB}\|_F^2\leq P_TN_S, \vert \bar{\mathbf{F}}_{\text{RF}}(i,j)\vert = \frac{1}{\sqrt{N_T}}, 1 \leq i \leq N_T, 1 \leq j \leq N_{\text{RF}}. \label{eq:hybrid_precoder_optmzn}
\end{align}
Note that the above optimization problem is non-convex owing to the non-linear constraints on the elements of $\bar{\mathbf{F}}_\text{RF}$, which renders it intractable. To circumvent this problem, one can initially design the optimal fully-digital TPC $\bar{\mathbf{F}} \in \mathbb{C}^{N_T \times N_S}$ via the substitution $\bar{\mathbf{F}}_\text{RF} \bar{\mathbf{F}}_\text{BB} =\bar{\mathbf{F}}$ in the above optimization problem and ignoring the constant magnitude constraints. Upon obtaining the fully-digital TPC $\bar{\mathbf{F}}^{\mathrm{opt}}$, one can then decompose it into its RF and baseband constituents represented by the matrices $\bar{\mathbf{F}}_\text{RF}^{\mathrm{opt}}$ and $\bar{\mathbf{F}}_\text{BB}^{\mathrm{opt}}$, respectively. The well-known water-filling solution for design of the fully-digital TPC is as follows.
% Substitute , where $\mathbf{F}  $ represent the equivalent digital precoder. Therefore, the above optimization problem now can be modified to find the optimal equivalent digital precoder $\mathbf{F}^{\mathrm{opt}$ as 
% \begin{align}
% \mathbf{F}^{\mathrm{opt}}
% &= \arg \max_{\left(\mathbf{F}\right)} \ \mathrm{log}_2  \Big| \mathbf{I}_{N_R}+\check{\mathbf{H}}\check{\mathbf{H}}^H\Big| \nonumber \\
% & \hspace{25pt}\mathrm{s. t.} \quad \|\mathbf{F}\|_F^2\leq P_TN_S,\label{opt_problem_prec}
% \end{align}
% where $\check{\mathbf{H}}=\mathbf{H}\mathbf{F} \in \mathbb{C}^{N_R \times N_S}$. Furthermore, in order to solve the modified convex optimization problem given in \eqref{opt_problem_prec}, we employ the water-filling procedure \cite{boyd2004convex} to obtain a closed form solution. 

Let $\mathbf{H}=\mathbf{U}\boldsymbol{\Sigma}\mathbf{V}^H$ represent the \ac{SVD} of the \ac{THz} \ac{MIMO} channel. The optimal fully-digital TPC  $\bar{\mathbf{F}}^{\mathrm{opt}}$ is expressed as
\begin{align}\label{opt_prec}
\bar{\mathbf{F}}^{\mathrm{opt}}=\mathbf{V}_1\mathbf{P}^{{1}/{2}},
\end{align}
where $\mathbf{V}_1=\mathbf{V}\left(:,1:N_S\right)  \in \mathbb{C}^{N_T \times N_S}$ and the matrix $\mathbf{P} \in \mathbb{R}_{+}^{N_S \times N_S}$ represents a diagonal power allocation matrix, whose $i$th diagonal element $p_{i}, 1 \leq i \leq N_S$, can be derived as $p_{i} = \max\left\{0,\left( \lambda - \frac{\sigma_v^2}{(\boldsymbol\Sigma(i,i))^2}\right)\right\}.$ The quantity $\lambda$ denotes the Lagrangian multiplier \cite{boyd2004convex}, which satisfies the power constraint $ \sum_{i=1}^{N_S} p_{i} \leq P_T N_S$. Subsequently, the optimal hybrid TPCs $\bar{\mathbf{F}}_\text{BB}^{\mathrm{opt}}$ and $\bar{\mathbf{F}}_\text{RF}^{\mathrm{opt}}$ can be obtained from the  optimal fully-digital TPC $\bar{\mathbf{F}}^{\mathrm{opt}}$ as the solution of the approximate problem \cite{yan2019dynamic}
\begin{align} \label{opt_problem}
\left\{\bar{\mathbf{F}}_\text{BB}^{\mathrm{opt}}, \bar{\mathbf{F}}_\text{RF}^{\mathrm{opt}}\right\}&=\underset{\left(\bar{\mathbf{F}}_\text{BB}, \bar{\mathbf{F}}_\text{RF}\right)}{{\mathrm{arg\text{ }  min}}} \ \left\Vert \bar{\mathbf{F}}^{\mathrm{opt}}-\bar{\mathbf{F}}_\text{RF}\bar{\mathbf{F}}_\text{BB} \right\Vert_{F}^2, \nonumber \\
& \hspace{-10pt} \mathrm{s.t.} \quad \left\Vert \bar{\mathbf{F}}_\text{RF} \bar{\mathbf{F}}_\text{BB} \right\Vert_F^2 \leq P_TN_S, \ \ \left\vert \bar{\mathbf{F}}_{\text{RF}}(i,j) \right\vert = \frac{1}{\sqrt{N_T}}.
\end{align}
Although the above optimization problem is non-convex, the following interesting observation substantially simplifies hybrid TPC design. Note that the \ac{THz} \ac{MIMO} channel of \eqref{eq:ch_model} can be compactly represented as $\mathbf {H}= \bar{\mathbf{A}}_{R} \mathbf{D} \bar{\mathbf{A}}_{T}^H,$ where $\bar{\mathbf{A}}_{R} \in \mathbb{C}^{N_R \times \left(N_{\text{LoS}}+1\right)N_{\text{ray}}}$ and $\bar{\mathbf{A}}_{T} \in \mathbb{C}^{N_T \times \left(N_{\text{LoS}}+1\right)N_{\text{ray}}}$ are the matrices that comprise  $\left(N_{\text{LoS}}+1\right)N_{\text{ray}}$ array response vectors corresponding to the \ac{AoA}s and \ac{AoD}s of all the multipath components, respectively, whereas the diagonal matrix  $\mathbf{D} \in \mathbb{C}^{\left(N_{\text{LoS}}+1\right)N_{\text{ray}} \times \left(N_{\text{LoS}}+1\right)N_{\text{ray}}}$ contains their complex path-gains.
\begin{algorithm}[!t]
\textbf{Input:} Estimated beamspace channel  $\widehat{\mathbf{h}}_{b}$, optimal fully-digital TPC $\bar{\mathbf{F}}^\mathrm{opt}$ and MMSE RC $\bar{\mathbf{W}}_\text{M}$, output covariance matrix $\mathbf{R}_{yy}$, number of RF chains $N_\text{RF}$, array response dictionary matrices $\mathbf A_{R} (\Phi_{R})$ and $\mathbf A_{T}(\Phi_{T} )$ \\
\textbf{Initialization:} $\bar{\mathbf{F}}_\text{RF}=\left[ \ \right]$, $\bar{\mathbf{W}}_\text{RF}=\left[ \ \right]$,  $\mathbf{h}_{b,\mathrm{abs}}=\vert\widehat{\mathbf{h}}_b\vert$, construct an ordered set $\mathcal{S}$ from the indices of elements of the vector $\mathbf{h}_{b,\mathrm{abs}}$, so that  $\mathbf{h}_{b,\mathrm{abs}}\left[\mathcal{S}(1)\right]\geq \mathbf{h}_{b,\mathrm{abs}}\left[\mathcal{S}(2)\right] \geq \mathbf{h}_{b,\mathrm{abs}}\left[\mathcal{S}(3)\right]\geq, \cdots \geq\mathbf{h}_{b,\mathrm{abs}}\left[\mathcal{S}(G_RG_T)\right] $\\
\textbf{for}\ {$i=1,2,\hdots,N_\text{RF}$}
\begin{itemize}
\item[1)] $j=\mathrm{floor}\left[ \left(\mathcal{S}\left(i\right)-1\right)/G_R\right]+1; k=\mathrm{rem}\left[ \left(\mathcal{S}\left(i\right)-1\right),G_R\right]+1;$ 
\item[2)] $\bar{\mathbf{F}}_\text{RF}=\left[\bar{\mathbf{F}}_\text{RF} \quad \mathbf{a}_t\left(\phi_j\right)\right]; \bar{\mathbf{W}}_\text{RF}=\left[\bar{\mathbf{W}}_\text{RF} \quad \mathbf{a}_r\left(\phi_k\right)\right];$
\end{itemize}
\textbf{end for}\\
$\bar{\mathbf{F}}_\text{BB}={\left(\bar{\mathbf{F}}_\text{RF}\right)}^{\dagger}\bar{\mathbf{F}}^\mathrm{opt}; \bar{\mathbf{W}}_\text{BB}=\left( \bar{\mathbf{W}}_\text{RF}^H\mathbf{R}_{yy}\bar{\mathbf{W}}_\text{RF}\right)^{-1}\bar{\mathbf{W}}_\text{RF}^H\mathbf{R}_{yy}\bar{\mathbf{W}}_\text{M}$; \\
\textbf{Output:} $\bar{\mathbf{F}}_\text{BB}$, $\bar{\mathbf{F}}_\text{RF}$, $\bar{\mathbf{W}}_\text{BB}$, $\bar{\mathbf{W}}_\text{RF}$
\caption{Hybrid transceiver design from the estimated beamspace \ac{THz} \ac{MIMO} channel $\widehat{\mathbf{h}}_{b}$}
\label{algo3}
\end{algorithm}
Thus, the row- and column-spaces of the channel matrix $\mathbf{H}$ obey
\begin{align}
    \mathcal{R}\left(\mathbf{H}^*\right) = \mathcal{C} \left( \bar{\mathbf A}_{T} \right), \quad \mathcal{C}\left( \mathbf{H}\right) = \mathcal{C} \left( \bar{\mathbf A}_{R} \right), \label{eq:spaces_rel_1}
\end{align}
where $\mathcal{R}(\cdot)$ and $\mathcal{C}(\cdot)$ represent the row and column spaces, respectively, of a matrix. At this juncture, using the \ac{SVD} of $\mathbf{H}$ together with \eqref{opt_prec}, one can conclude that
\begin{align}
   \mathcal{C}\left( \mathbf{F}^\mathrm{opt}\right) \subseteq \mathcal{C} \left(\mathbf V_1 \right) \subseteq \mathcal{C} \left(\mathbf V(:,1:\rho) \right) = \mathcal{R}\left(\mathbf{H}^*\right), \label{eq:spaces_rel_2}
\end{align}
where we have $\rho=\mathrm{rank}(\mathbf{H})$ and $\rho \geq N_S$. Hence, from \eqref{eq:spaces_rel_1} and \eqref{eq:spaces_rel_2}, one can deduce that 
\begin{align}
    \mathcal{C}\left( \bar{\mathbf{F}}^\mathrm{opt}\right) \subseteq \mathcal{C} \left( \bar{\mathbf A}_{T} \right).
\end{align}
This implies that a suitable linear combination of the columns of $\bar{\mathbf A}_{T}$ can determine any column of the matrix $\bar{\mathbf{F}}^\mathrm{opt}$. Furthermore, since it is evident that the array response vectors $\mathbf{a}_t$ also satisfy the non-convex constraints of \eqref{opt_problem}, courtesy \eqref{eq:Array_resp_vector}, the columns of the matrix $\bar{\mathbf A}_{T}$ are a suitable candidate for the \ac{RF} TPC $\bar{\mathbf{F}}_{\text{RF}}$. However, a pair of key challenges remain. Firstly, the array response matrix $\bar{\mathbf A}_{T}$ is unknown. To compound this problem, one can only choose $N_\text{RF}$ columns of $\bar{\mathbf A}_{T}$ for the design of the \ac{RF} TPC, owing to the fact that there are only $N_\text{RF}$ \ac{RF} chains. Both the above-mentioned issues can be efficiently addressed by employing the estimate $\widehat{\mathbf{h}}_b$ of the beamspace channel as follows.

Note that the dominant coefficients of the beamspace channel matrix $\mathbf{H}_b$ (c.f. \eqref{eq:beamspacechannel_matrix}) represent the active (\ac{AoA}, \ac{AoD})-pairs. Therefore,  to design the \ac{RF} TPC $\bar{\mathbf{F}}_{\text{RF}}$, one can directly employ the estimate $\widehat{\mathbf{H}}_b$ of the beamspace channel matrix derived from the estimation techniques proposed in Section-\ref{Sparse_CES} and \ref{sec:BL_based_est}. The salient steps in the proposed hybrid transceiver design procedure are detailed in Algorithm \ref{algo3}. We commence by arranging the elements of the quantity $\vert \widehat{\mathbf{h}}_b \vert$ in descending order and determine the $N_\text{RF}$ entries that have the highest magnitude. The corresponding locations in the beamspace matrix representation yield the active (\ac{AoA}, \ac{AoD})-pairs. More precisely, the column indices, represented by $j$ in Step-1 of Algorithm \ref{algo3}, provide the active \acp{AoD} in the transmit angular grid $\Phi_T$, whereas the row indices, denoted by $k$, yield the active \acp{AoA}. The \ac{RF} TPC $\bar{\mathbf{F}}_{\text{RF}}$ can be subsequently constructed from the $N_\text{RF}$-dominant columns of the transmit array response dictionary matrix $\mathbf A_T (\Phi_T )$ (c.f. \eqref{eq:rx_array_res_dict}). Finally, the baseband TPC $\bar{\mathbf{F}}_\text{BB}$ can be obtained from the \ac{LS} estimate as $\bar{\mathbf{F}}_\text{BB}={\left(\bar{\mathbf{F}}_\text{RF}\right)}^{\dagger}\bar{\mathbf{F}}^\mathrm{opt}$. The procedure of the hybrid RC design in \ac{THz} \ac{MIMO} systems is described next.
\subsection{Hybrid \ac{MMSE} RC Design}
This subsection describes the design of the hybrid \ac{MMSE} RC components $\bar{\mathbf{W}}_\text{BB} \in \mathbb{C}^{ N_\text{RF} \times N_S }$ and $\bar{\mathbf{W}}_\text{RF} \in \mathbb{C}^{N_R \times N_\text{RF}}$ relying on the estimated beamspace channel matrix $\widehat{\mathbf{H}}_b$. Toward this, for a given hybrid TPC $\bar{\mathbf{F}}_\text{BB}$  and  $\bar{\mathbf{F}}_\text{RF}$, one can minimize the \ac{MSE} of approximation between the transmit baseband symbol vector $\bar{\mathbf x} \in \mathbb{C}^{N_S \times 1}$ and the output $\bar{\mathbf y}$, which obey \eqref{eq:th_sys_mdl}, subject to the constant-magnitude constraints on the elements of the RF RC $\bar{\mathbf{W}}_\text{RF}$. Let $\mathbf{y} \in \mathbb{C}^{N_R \times 1}$ denote the signal impinging at the RAs, which is given by $$\mathbf{y} =  \mathbf{H} \bar{\mathbf{F}}_{\text{RF}} \bar{\mathbf{F}}_{\text{BB}} \bar{\mathbf x} + \bar{\mathbf v}.$$ Thus, the RC design optimization problem can be formulated as
\begin{align}
\left\{\bar{\mathbf{W}}_\text{RF}^{\mathrm{opt}},\bar{\mathbf{W}}_\text{BB}^{\mathrm{opt}}\right\}
&= \underset{\left(\bar{\mathbf{W}}_\text{RF},\bar{\mathbf{W}}_\text{BB}\right)}{{\mathrm{arg\text{ }  min}}} \  \mathbb{E}  \bigg\{ \left\Vert\bar{\mathbf x}-\bar{\mathbf{W}}_\text{BB}^H \bar{\mathbf{W}}_\text{RF}^H\mathbf y\right\Vert^{2}_{2}\bigg\},   \nonumber \\
&  \hspace{25pt}\mathrm{s. t.} \quad \left\vert\bar{\mathbf{W}}_{\text{RF}}(i,j)\right\vert = \frac{1}{\sqrt{N_R}}. \label{eq:hybrid_combiner_optmzn}
\end{align}
As detailed in Appendix \ref{appen:proof_MMSE_comb}, the above optimization problem can be reformulated as
\begin{align}
\left\{\bar{\mathbf{W}}_\text{RF}^{\mathrm{opt}},\bar{\mathbf{W}}_\text{BB}^{\mathrm{opt}}\right\}
&= \underset{\left(\bar{\mathbf{W}}_\text{RF},\bar{\mathbf{W}}_\text{BB}\right)}{{\mathrm{arg\text{ }  min}}} \  \left\Vert\mathbf{R}_{yy}^{1/2}\left(\bar{\mathbf{W}}_\text{M}-\bar{\mathbf{W}}_\text{RF} \bar{\mathbf{W}}_\text{BB}\right)\right\Vert_{F}^2   \nonumber \\
&   \hspace{25pt}\mathrm{s. t.} \quad \left\vert\bar{\mathbf{W}}_{\text{RF}}(i,j)\right\vert = \frac{1}{\sqrt{N_R}},\label{opt_combiner_eq}
\end{align}
where the matrices $\mathbf{R}_{yy} \in \mathbb{C}^{ N_R \times N_R }$ and $\bar{\mathbf{W}}_\text{M} \in \mathbb{C}^{ N_R \times N_S}$ represent the covariance matrix of the output vector $\mathbf{y}$ and the optimal \ac{MMSE} RC, respectively. These can be formulated as
\begin{align}
\mathbf{R}_{yy}=&\mathbf{E}\left\{\mathbf{y}\mathbf{y}^H\right\}= \frac{1}{N_S}\left( \mathbf{H} \bar{\mathbf{F}}_\text{RF} \bar{\mathbf{F}}_\text{BB}\bar{\mathbf{F}}_\text{BB}^H\bar{\mathbf{F}}_\text{RF}^H\mathbf{H}^H+N_S\sigma_v^2 \mathbf{I}_{N_R} \right),\\
\bar{\mathbf{W}}_\text{M}=&\mathbf{H} \bar{\mathbf{F}}_\text{RF} \bar{\mathbf{F}}_\text{BB}\left(\bar{\mathbf{F}}_\text{BB}^H\bar{\mathbf{F}}_\text{RF}^H\mathbf{H}^H\mathbf{H}\bar{\mathbf{F}}_\text{RF}\bar{\mathbf{F}}_\text{BB}+N_S\sigma_v^2\mathbf{I}_{N_S}\right)^{-1}.\label{opt_mmse_cominer}
\end{align}
Since we have $\mathcal{C}\left( \bar{\mathbf{W}}_\text{M} \right) \subseteq \mathcal{C}\left( \mathbf{H} \right) =  \mathcal{C}\left( \bar{\mathbf{A}}_{R} \right)$, similar to the simplified TPC design, one can design the RF RC $\bar{\mathbf{W}}_\text{RF}$ from the array response vectors of the $N_{\text{RF}}$ active \acp{AoA} obtained from the estimated beamspace channel. Finally, the baseband RC $\bar{\mathbf{W}}_\text{BB}$ can be derived using the following weighted-\ac{LS} solution: $\bar{\mathbf{W}}_\text{BB} =  \left( \bar{\mathbf{W}}_\text{RF}^H\mathbf{R}_{yy}\bar{\mathbf{W}}_\text{RF}\right)^{-1}\bar{\mathbf{W}}_\text{RF}^H\mathbf{R}_{yy}\bar{\mathbf{W}}_\text{M}.$ For convenience, the hybrid RC design is also presented in Algorithm \ref{algo3}. Note that a key advantage of the proposed hybrid \ac{MMSE} RC design is that the processed signal $\bar{\mathbf y} = \bar{\mathbf{W}}_\text{BB}^H \bar{\mathbf{W}}_\text{RF}^H\mathbf y$ directly yields the MMSE estimate of the transmit symbol vector $\bar{\mathbf{x}}$.

{Note that the SOMP technique, as described in \cite{el2014spatially}, requires $N_{\text{RF}}$ iterations for selecting the $N_{\text{RF}}$ dominant array response vectors via a computationally intensive correlation method (Step-4 and 5 of Algorithm-1 in \cite{el2014spatially}), followed by an intermediate LS solution in each iteration. By contrast, the proposed hybrid precoder design framework is directly able to compute the final baseband precoder using the LS solution, once the RF precoder is derived using the estimated beamspace domain CSI. Thus, the proposed hybrid precoder design has a significantly lower computational cost, while performing very close to the ideal fully-digital benchmark, as demonstrated in our simulation results of Fig. \ref{fig:THz_ch_capacity_plots} and Fig. \ref{fig:THz_ch_capacity_BER_plots}. Furthermore, the framework for beamspace domain CSI estimation, followed by our hybrid transceiver design developed requires significantly lower feedback, since the receiver only has to feed back a few indices of the dominant beamspace components together with their quantized gains in order to construct the hybrid precoder of the transmitter.}

{The objectives of the proposed hybrid transceiver optimization problems in \eqref{eq:hybrid_precoder_optmzn} and \eqref{eq:hybrid_combiner_optmzn} of this treatise are to design a capacity-optimal hybrid precoder and MMSE-optimal hybrid combiner. We would like to clarify that the proposed solution does not guarantee optimality, since our solution directly employs the estimate $\widehat{\mathbf{h}}_b$ of the beamspace domain channel obtained from the proposed BL-based \ac{CSI} estimators. Hence, its performance heavily relies on the estimated CSI, as demonstrated in our simulation results of Fig. \ref{fig:THz_ch_capacity_plots} and \ref{fig:THz_ch_capacity_BER_plots}, which will always be the case for any practical solution developed for this problem. However, a solid mathematical foundation established after \eqref{opt_problem} justifies its low complexity and significantly improved performance that is close to the corresponding optimal fully-digital solution. On the other hand, the existing optimization algorithms conceived for hybrid transceiver design, such as \cite{8329410, 8606437}, may guarantee certain optimality, but they are typically iterative and computationally complex.}

{
\subsection{Computational Complexity} \label{subsec:comp_comp}
This subsection derives the computational cost of the proposed THz hybrid MIMO transceiver design, which is directly coupled with the beamspace domain CSI estimation module. The computational complexity order of the BL technique may be shown to be $\mathcal{O}\left( G_R^3G_T^3 \right)$, which arises due to the matrix inversion of size-$\left[ G_RG_T \times G_RG_T \right]$. On the other hand, the worst-case complexity order of the OMP scheme is seen to be $\mathcal{O}\left( M_T^3 M_R^3 \right)$, which arises due to the intermediate LS estimate required in each iteration. Finally, the computational cost of the hybrid transceiver design presented in Algorithm-\ref{algo3} based on the estimated BL-based CSI is seen to be on the order of $\mathcal{O}\left( N_T^3 + N_{\text{RF}}^3 +N_S^3 \right)$. Here, the $\mathcal{O}\left( N_T^3 \right)$ term arises due to the SVD of the THz MIMO channel $\mathbf{H}$, $\mathcal{O}\left( N_{\text{RF}}^3 \right)$ is due to the LS solution of the baseband precoder $\bar{\mathbf{F}}_\text{BB}$ and combiner $\bar{\mathbf{W}}_\text{BB}$, whereas $\mathcal{O}\left( N_S^3 \right)$ is due to the calculation of the fully-digital MMSE solution in \eqref{opt_mmse_cominer}. Thus, it can be readily observed that the overall computational cost of obtaining the OMP-based estimated CSI followed by the hybrid transceiver design is lower than that of employing the BL-based estimated CSI. However, as discussed later in our simulation results, the performance of the proposed hybrid transceiver design using OMP-based CSI is poor in comparison to that obtained via the BL-based CSI for an identical pilot overhead. Hence, there is a trade-off between the computational cost and the performance improvement attained.}
\section{Simulation Results}\label{sec:simulation_results}
The performance of the proposed \ac{CSI} estimation techniques conceived for our hybrid \ac{THz} \ac{MIMO} transceiver design is illustrated by our simulation results. For this study, the magnitudes of the \ac{LoS} and \ac{NLoS} complex path-gains $\alpha(f,d)$ have been generated using \eqref{eq:Los_pg} and \eqref{eq:Nlos_pg}, respectively, whereas the associated  phase shifts $\psi$ are generated as \ac{i.i.d.} samples of a random variable uniformly distributed over the interval $(-\pi, \pi]$. The molecular absorption coefficient $k_{\text{abs}}(f)$ has been computed using the procedure described in Section-\ref{sec:k_abs} relying on the \ac{HITRAN} database \cite{rothman2009hitran}. The operating carrier frequency $f$ and the transmission distance $d$ are set to $0.3$\,\ac{THz} and $10$\,m, respectively, unless stated otherwise. Furthermore, an office scenario is considered with the system pressure $p$ and temperature $T$ set to $1$\,atm and $296$\,K, respectively, which has the following molecular composition: water vapour $($1\%$)$, oxygen $($20.9\%$)$ and nitrogen $($78.1\%$)$. 
% Studies have shown that the major contribution to the total molecular absorption comes from $(\text{H}_2\text{O})$ \cite{sarieddeen2020overview,jornet2011channel}, hence the atmosphere is considered with $50\%$ of relative humidity.
% Since $k_{\text{abs}}$ depends on the operating carrier frequency $f$ and the molecular composition of the channel, this work considers two different frequencies $f\in \{0.3,0.5\}\text{ THz}$. As described in \cite{jornet2011channel}, the 
The \ac{THz} \ac{MIMO} channel is generated using a single \ac{LoS} and $N_\text{NLoS}=4$ \ac{NLoS} components, in which $3$ \ac{NLoS} components have first-order reflections, whereas the $4$th \ac{NLoS} component is assumed to have a second-order reflection from the respective scatterer. Furthermore, each multipath component is composed of $N_\text{ray} \in \{1,3\}$ diffused rays, whose \acp{AoA}/ \acp{AoD} follow \ac{i.i.d.} Gaussian distributions with an angular spread of standard deviation of $1/10\text{ radian}$ around the mean angle of the particular multipath component \cite{priebe2011aoa}. The standard deviation of the roughness of various reflecting media is set as $\sigma \in \{0.05,0.13,0.15\}$\,mm \cite{piesiewicz2007scattering}. The TA and RA gains, $G_t^{a}$ and $G_r^{a}$, respectively, are set to $G_t^a=G_r^a= 25$\,dB. Given the various channel parameters mentioned above, the \ac{THz} \ac{MIMO} channel has been generated using \eqref{eq:ch_model}-\eqref{eq:NLOS_model}.

For simulation, this work considers two \ac{THz} \ac{MIMO} systems, namely System-I and System-II, having the simulation parameters described below. For System-I, the number of TAs/ RAs is set to $N_T=N_R=32$ with $N_\text{RF}=8$ RF chains at both the ends. The number of training vectors, $M_T$ and $M_R$, is set to $M_T=M_R=24$, which can be seen to be lower than  $N_T$ and $N_R$. The angular grid sizes, $G_T$ and $G_R$, for this system are set as $G_T=G_R=36$, which is higher than $\max(N_T, N_R)$. By contrast, the simulation parameters of System-II are as follows: $N_T=N_R=16$, $N_\text{RF}=4$, $M_T=M_R=12$ and $G_T=G_R=20$. Note that, in contrast to the conventional channel estimation models, which are typically over-determined, the setting for System-I results in a $[576 \times 1296]$-size equivalent sensing matrix $\widetilde{\Phi}$, thus leading to an under-determined system, as described by Eq. \eqref{eq: CS_eq_model}. However, as shown in the simulation results, the proposed sparse estimation techniques developed in our paper are able to estimate the \ac{THz} \ac{MIMO} \ac{CSI} with the desired accuracy even in such a challenging scenario.  Furthermore, the antenna spacings, $d_t$ and $d_r$, for both the Systems have been set to $d_t=d_r= \frac{\lambda}{2}$. The SNR is defined as $\text{SNR}=10\log_{10}\left(\frac{1}{\sigma_v^2}\right)\text{ dB}$. For the \ac{OMP} technique, the stopping parameter $\epsilon_t$ is set to $\epsilon_t=\sigma_v^2$, whereas for the \ac{BL} technique, we set $\epsilon=10^{-6}$ and $K_\text{max}=50$.
% The number of transmit symbols is set to $N_S=3$ and 
\begin{figure*}[t!]
\centering
\subfloat[]{\includegraphics[scale = .43]{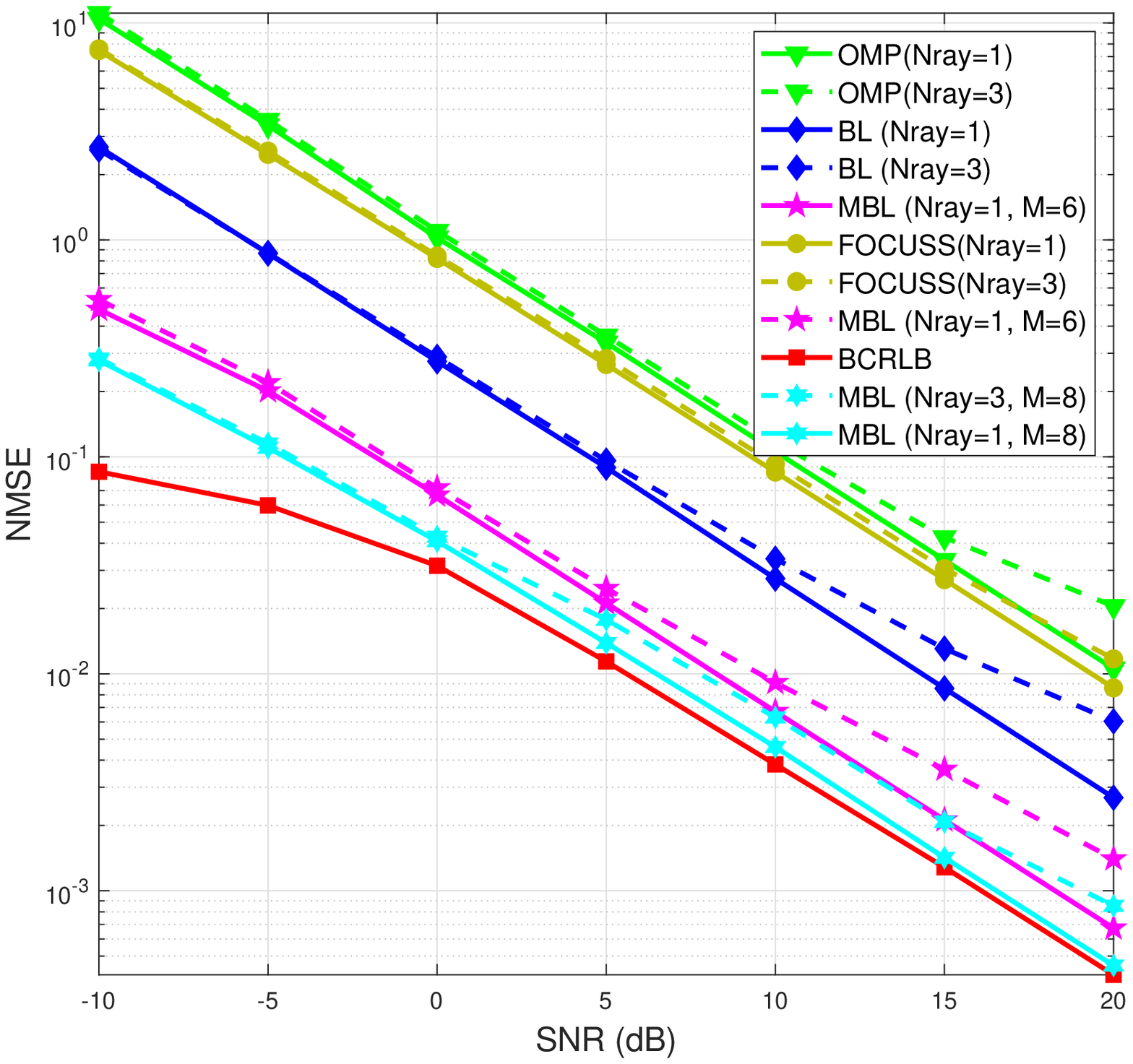}}
\hfil
\hspace{-20pt}\subfloat[]{\includegraphics[scale = .43]{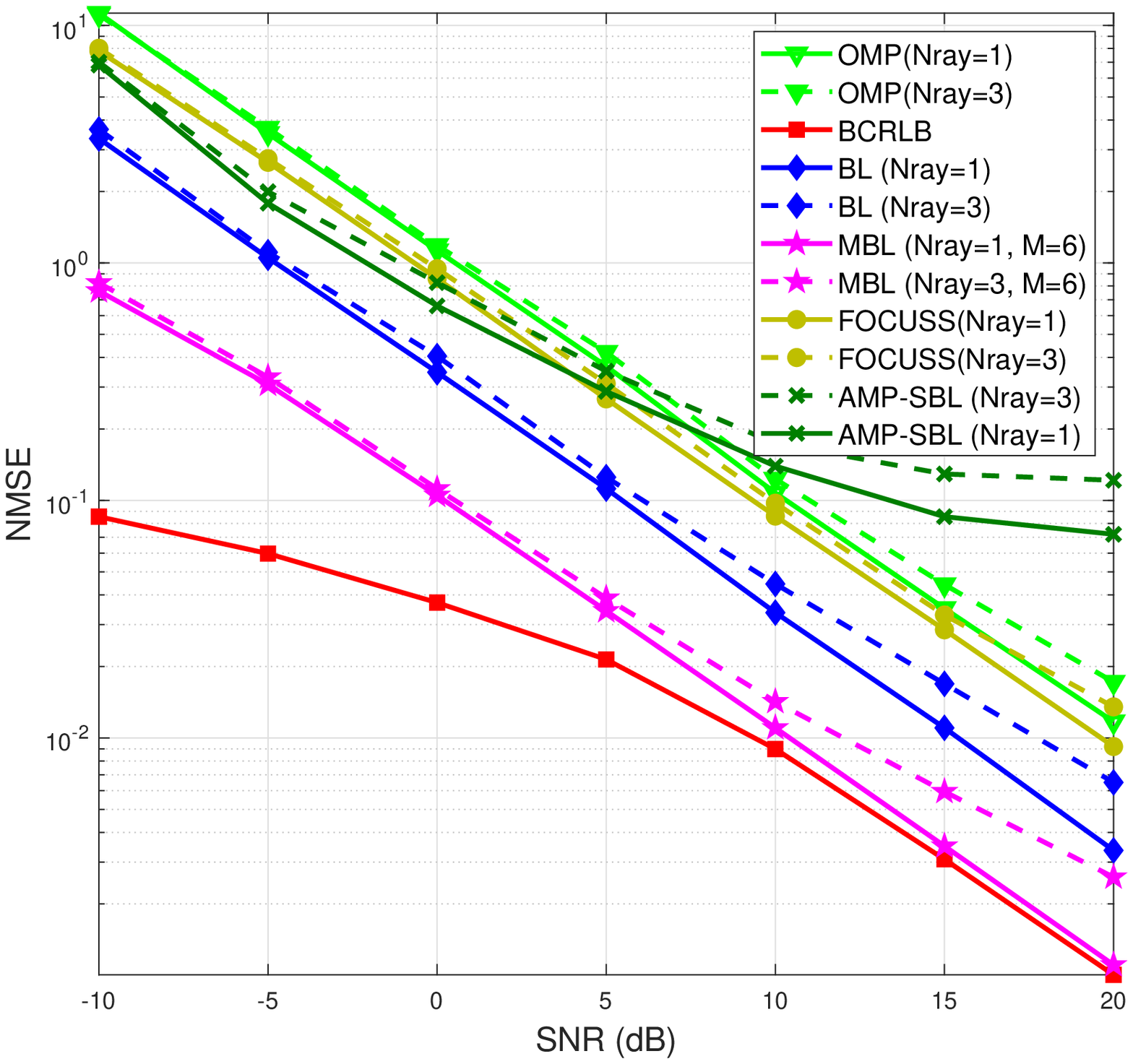}}
\caption{ NMSE versus SNR comparison for a \ac{THz} \ac{MIMO} $ \left(a\right)$  System-I; $\left(b\right)$ System-II.}
\label{fig:THz_ch_est_plots}
\end{figure*}
\subsection{\ac{THz} Hybrid \ac{MIMO} Channel Estimation}
Fig. \ref{fig:THz_ch_est_plots}(a) and Fig. \ref{fig:THz_ch_est_plots}(b) illustrate the sparse channel estimation performance versus SNR for the \ac{THz} \ac{MIMO} System-I and System-II, respectively, in terms of the \ac{NMSE}, which is defined as $\mathrm{NMSE}=\frac{\left\Vert\widehat{\mathbf{H}}-\mathbf{H}\right\Vert_F^2}{\left\Vert\mathbf{H}\right\Vert_F^2}.$ The performance of the proposed \ac{OMP} and \ac{BL}-based algorithms is also compared to that of the popular sparse signal recovery technique \ac{FOCUSS} \cite{gorodnitsky1997sparse}, typically used in the field of image reconstruction. The performance of all the competing techniques is also benchmarked against the \ac{BCRLB}, as derived in Section-\ref{sec:BCRLB}. 
% For simplicity, the active AoAs/AoDs are chosen uniformly from the angular grid. For an perfectly $\textit{on-grid}$ scenario, the number of diffused rays are set to $N_\text{ray}=1$, whereas for an $\textit{off-grid}$ scenario, the number of diffused rays are set to $N_\text{ray}=3$. For each multipath component, AoAs/ AoDs corresponding to the diffused rays follow Gaussian distribution with variance of $0.1\text{ radian}$ around the mean-angle of the multipath component. 
From both the figures, one can conclude that the proposed \ac{BL}-based sparse channel estimation technique outperforms the \ac{OMP} and \ac{FOCUSS}, which is attributed to its robustness toward the tolerance parameter $\epsilon$ and $K_\text{max}$, and toward the dictionary matrix $\widetilde{\mathbf{\Phi}}$. On the other hand, the sensitivity of the \ac{OMP} technique to the stopping threshold $\epsilon_t$ and to the dictionary matrix lead to structural and convergence errors, as described in \cite{wipf2004sparse}, thus degrading the eventual sparse recovery of the beamspace channel. Furthermore, the \ac{OMP} technique suffers due to its greedy nature and error propagation, since the error encountered in the selection of the indices cannot be rectified in the subsequent iterations, thus negatively impacting its performance. On the other hand, the performance of \ac{FOCUSS} is poor due to its convergence deficiencies and sensitivity to the regularization parameter \cite{wipf2004sparse}. {The proposed techniques are also compared to low-complexity approximate message passing (MP)-based sparse Bayesian learning (AMP-SBL) \cite{al2017gamp}, which is the Bayesian extension of the MP algorithms developed in \cite{8523816, 9366805}. The performance of the AMP-SBL algorithm is poor in comparison to the proposed BL algorithm, since it only tracks the \textit{a posteriori} mean and variance of each element of the sparse vector, leading to its sub-optimal performance, especially at high SNR.} {One can also note from Fig. \ref{fig:THz_ch_est_plots}(a) that the proposed \ac{MBL} technique approaches the \ac{BCRLB} upon increasing the number of measurements $M$.} This is significant, since the \ac{BCRLB} is derived for an ideal scenario, where the \acp{AoA}/ \acp{AoD} are perfectly known, whereas the \ac{BL} framework does not rely on this idealized simplifying assumption. Another interesting observation is as follows. When the \ac{THz} \ac{MIMO} channel has $N_{ray} = 3$ diffused rays, the performance of all the competing schemes degrades. The reason behind this degradation is that the diffused rays lead to broadening the beamwidth of the \acp{AoA}/ \acp{AoD}, which essentially increases the support of the beamspace channel, eventually degrading the performance of sparse signal recovery. However, one can also verify that this degradation is minimal for the proposed \ac{BL} scheme, which outperforms the others in this scenario as well. Furthermore, one can also note that the proposed sparse estimation frameworks are capable of accurately estimating the $N_R \times N_T$ \ac{THz} \ac{MIMO} channel using $M_T$ and $M_R$ beam-patterns, where $M_TM_R << N_TN_R$. It is plausible that this is not possible using the conventional LS and \ac{MMSE} schemes, as described in Section-\ref{sec:channel_est}. Thus, its superior \ac{CSI} estimation performance coupled with its lower pilot overhead make the proposed \ac{BL}-based sparse estimation framework ideally suited for \ac{THz} \ac{MIMO} systems.
% best  an off-grid scenario is because of the fact that the true AoAs and AoDs of the multipath components differs from the quantized angular grids $\Phi_T$ and $\Phi_R$, respectively.  
% Also, one can observe that there is not much difference between the NMSE's of an off-grid scenario and an on-grid scenario, which shows the capability of the proposed approach to recover the grid mismatch. 

% In Fig. \ref{fig:THz_ch_est_plots}(c) the NMSE performance of the BL and OMP based channel estimation techniques for a $32\times32$ THz MIMO system is demonstrated with respect to the number of pilot beams. The SNR is set to $\text{SNR}=\{-10,20\}\text{ dB}$ and similar to the previous scenario, here also we used a carrier frequency of $0.3\text{ THz}$ to generate an on-grid channel. It can be readily observed that, as the number of pilots beams are increasing, which in turn results into a better estimate and hence, the performance of each scheme is getting better. Also, the performance will be better if the SNR is high and also with a higher SNR, the NMSE decreases rapidly as we increase the number of pilot beams. Similar to the previous plots, BL performance is better than OMP in this case as well.  
\begin{figure*}[t!]
\centering
\hspace{-10pt}\subfloat[]{\includegraphics[scale = .33]{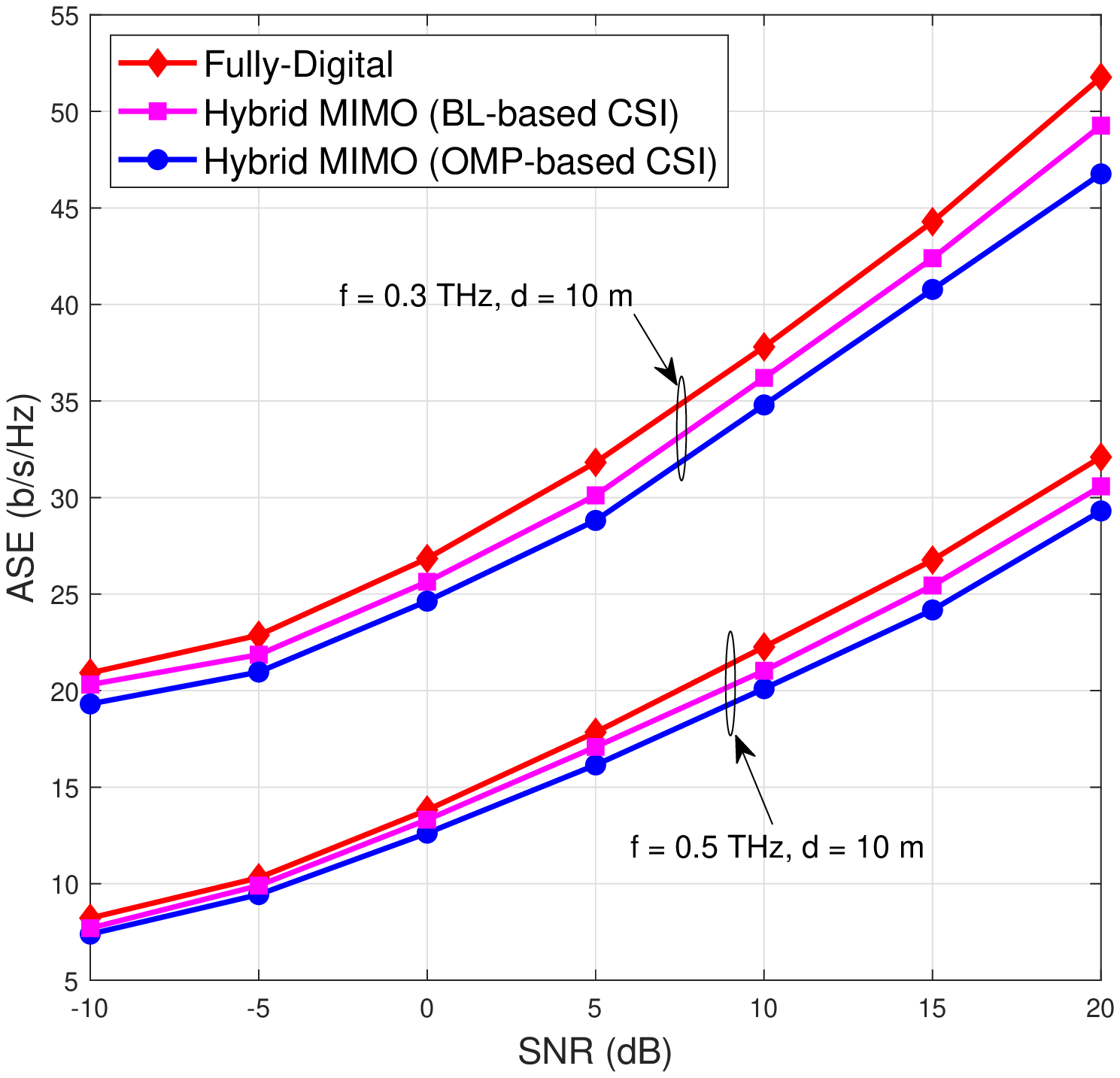}}
\hfil
\hspace{-10pt}\subfloat[]{\includegraphics[scale = .33]{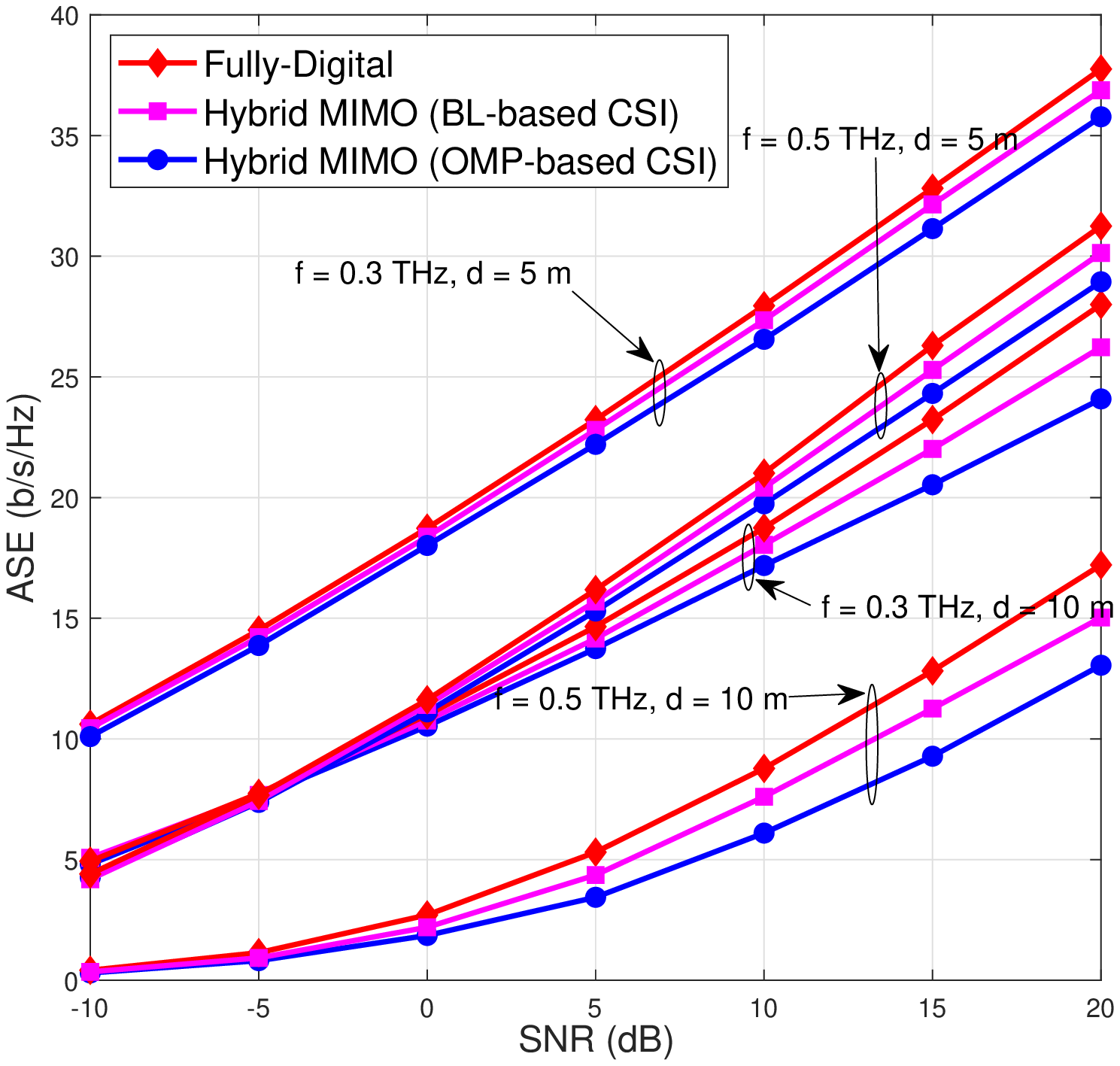}}
\hfil
\hspace{-10pt}\subfloat[]{\includegraphics[scale = .33]{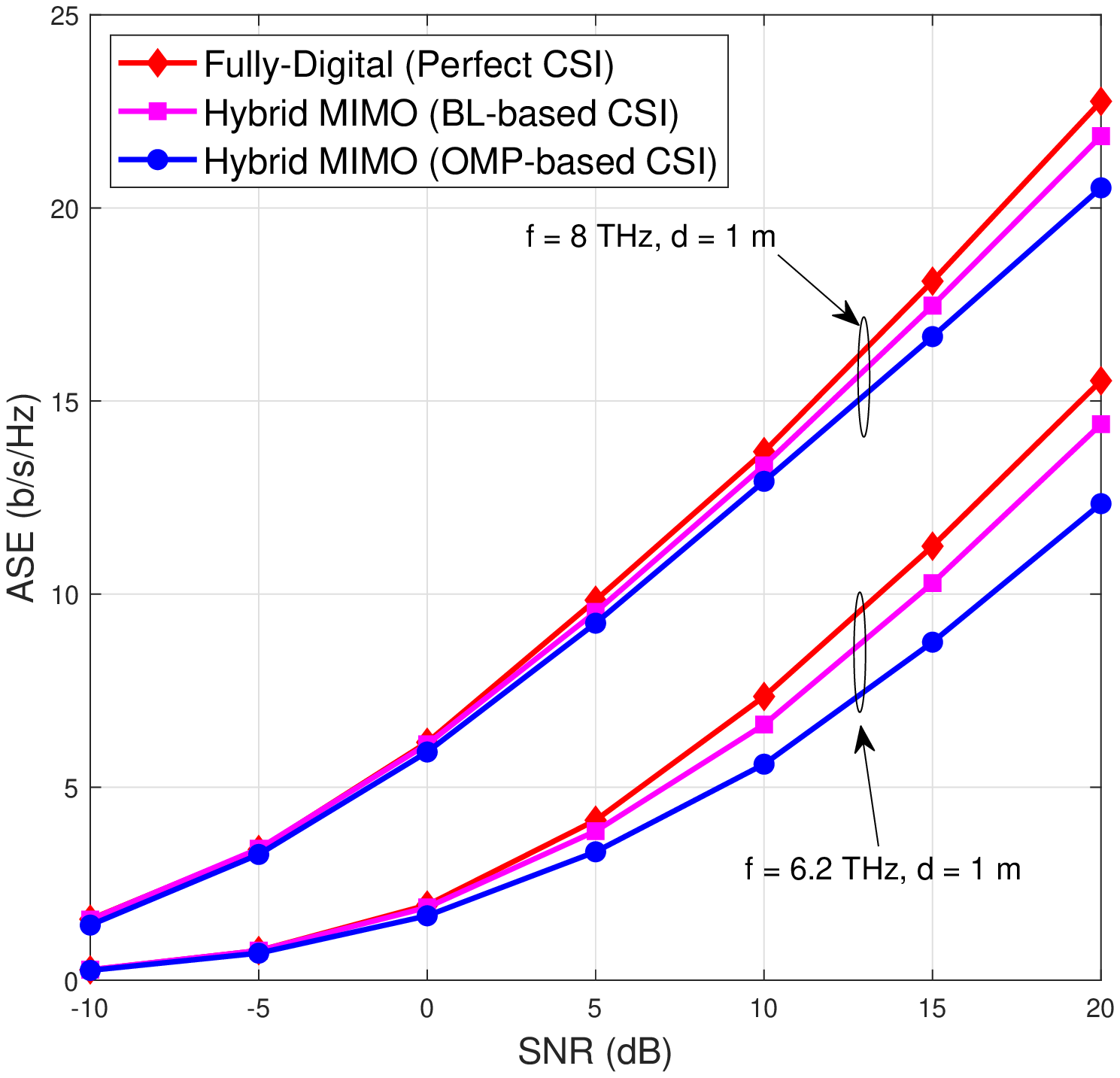}}
\caption{ASE versus SNR comparison for a \ac{THz} \ac{MIMO}, $ \left(a\right)$ System-I; $ \left(b\right)$ System-II, with different frequencies and distances; $ \left(c\right)$ Effect of molecular absorption losses on ASE \vspace{-10pt}}
\label{fig:THz_ch_capacity_plots}
\end{figure*}
\begin{figure*}[t!]
\centering
\subfloat[]{\includegraphics[scale = .45]{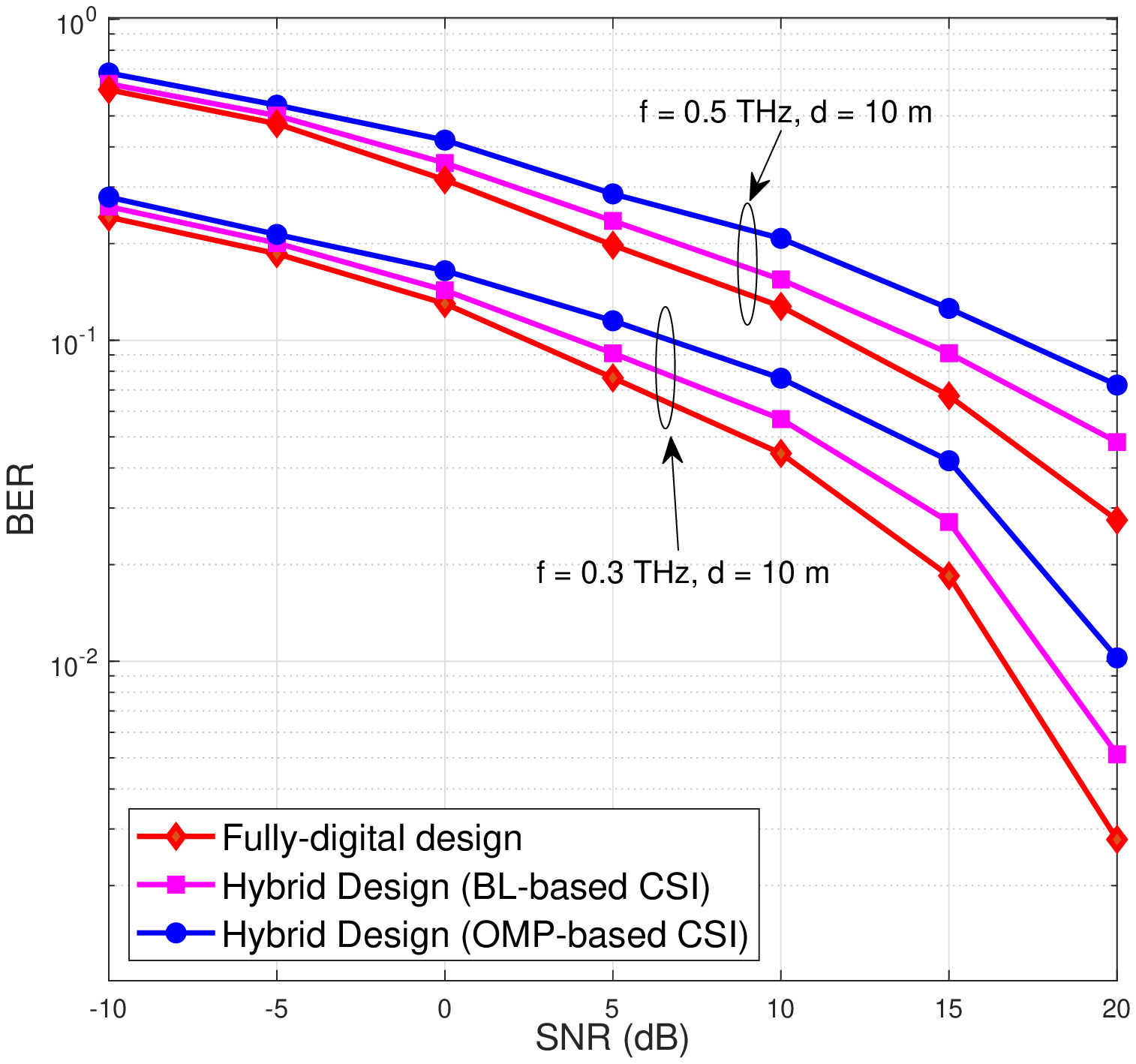}}
\hfil
\hspace{-20pt}\subfloat[]{\includegraphics[scale = .37]{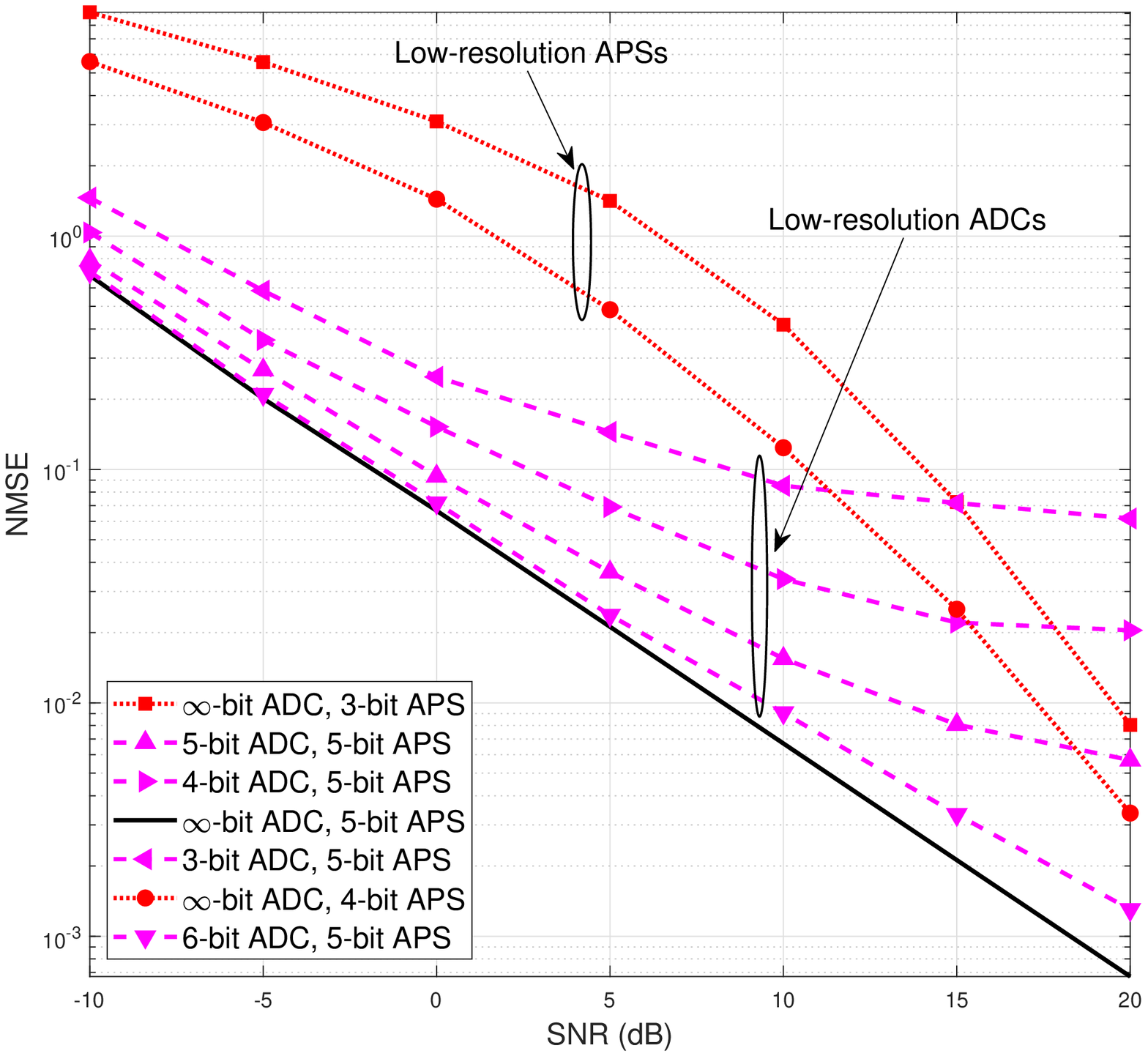}}
\caption{$ \left(a\right)$ \ac{BER} versus SNR comparison, for \ac{THz} \ac{MIMO}  System-II; $\left(b\right)$ NMSE versus SNR comparison for a \ac{THz} \ac{MIMO} System-I considering low-resolution ADCs and APSs. \vspace{-10pt}}
\label{fig:THz_ch_capacity_BER_plots}
\end{figure*}

\subsection{\ac{THz} \ac{MIMO} Hybrid Transceiver Design}
This subsection evaluates both the \ac{ASE} in bits/sec/Hz and the \ac{BER} to illustrate the performance of the proposed hybrid transceiver design. The \ac{ASE} is computed using the well-known Shannon capacity formula as $C=\log_2\left\vert \mathbf{I}_{N_S}+\frac{1}{N_S}\mathbf{R}_{n}^{-1}\mathbf{H}_{\text{eq}}\mathbf{H}_{\text{eq}}^H\right\vert,$
where the matrices $\mathbf{R}_{n}$ and $\mathbf{H}_{\text{eq}}$ denote the covariance of the combined noise and equivalent baseband channel, respectively, given by $\mathbf{R}_{n}=\sigma_v^2\bar{\mathbf{W}}_{\text{BB}}^H \bar{\mathbf{W}}_{\text{RF}}^H \bar{\mathbf{W}}_{\text{RF}}\bar{\mathbf{W}}_{\text{BB}},
\mathbf{H}_{\text{eq}}=\bar{\mathbf{W}}_{\text{BB}}^H \bar{\mathbf{W}}_{\text{RF}}^H \mathbf{H} \bar{\mathbf{F}}_{\text{RF}} \bar{\mathbf{F}}_{\text{BB}}.$ The quantities $\bar{\mathbf{W}}_{\text{RF}}, \bar{\mathbf{W}}_{\text{BB}}, \bar{\mathbf{F}}_{\text{RF}}$ and $\bar{\mathbf{F}}_{\text{BB}}$ have been evaluated using the proposed hybrid transceiver design described in Algorithm-\ref{algo3}, which in turn requires the estimated beamspace domain \ac{CSI} obtained from the \ac{OMP} (Algorithm-\ref{algo1}) or \ac{BL} (Algorithm-\ref{algo2}) schemes. The \ac{ASE} of a fully-digital \ac{THz} \ac{MIMO}  system having perfect CSI is also plotted therein to benchmark the performance and to demonstrate the gap between the proposed hybrid and ideal baseband transceiver architectures.  

Fig. \ref{fig:THz_ch_capacity_plots}(a) plots the \ac{ASE} versus \ac{SNR} for System-I. Observe that the proposed hybrid transceiver design using the estimated beamspace domain \ac{CSI} yields an \ac{ASE} that is reasonably close to that of the fully-digital system having perfect \ac{CSI}. This demonstrates the efficacy of the proposed hybrid transceiver design as well as that of the OMP and BL-based sparse CSI estimation techniques. The improved \ac{CSI} estimation accuracy of the \ac{BL} technique also leads to higher \ac{ASE} in comparison to the same achieved using \ac{OMP}-based CSI. Furthermore, the \ac{ASE} is also plotted for two different frequencies, viz., $f \in \{0.3,0.5\}$ THz. Observe from the figure that due to the high free-space losses characterized by \eqref{eq:losses}, the \ac{ASE} of the \ac{THz} \ac{MIMO} system at the higher operating frequency $f=0.5$ THz is lower than that at $f=0.3$ THz, for a given transmission distance of $d = 10$ m. A similar observation can be made in Fig. \ref{fig:THz_ch_capacity_plots}(b) for System-II, where the effect of varying the transmission distance is also presented. Once again, due to the high free-space losses, the \ac{ASE} of the \ac{THz} \ac{MIMO} system at the higher transmission distance of $d=10$ m is lower than at $d=5$ m. Fig. \ref{fig:THz_ch_capacity_plots}(c) illustrates another interesting result by considering the pair of frequencies $f \in \{6.2,8.0\}$ THz for the same transmission distance of $d = 1$ m. Note that the \ac{ASE} of the \ac{THz} \ac{MIMO} system for $f=6.2$ THz is lower than at $f=8.0$ THz, which is in contrast to the results of Figs. \ref{fig:THz_ch_capacity_plots}(a)-(b). Following the procedure described in Section-\ref{sec:k_abs} and employing the \ac{HITRAN} database, the molecular absorption coefficients $k_{\text{abs}}(f)$ at $f = 6.2$ THz and $8.0$ THz approximately evaluate to $\approx 3.1\text{ m}^{-1}$ and $\approx 0.3\text{ m}^{-1}$, respectively. Hence, the poor performance at $f=6.2$ THz can be attributed to the higher molecular absorption losses at this operating frequency, which is a characteristic feature of the \ac{THz} \ac{MIMO} channel. Therefore, in order to precisely characterize the system performance at a specific frequency, one must consider the effect of the molecular absorption coefficient $k_{\text{abs}}(f)$ and the associated losses $L_{\text{abs}}(f,d)$, as described in \eqref{eq:losses}. Finally, Fig. \ref{fig:THz_ch_capacity_BER_plots}(a) plots the \ac{BER} versus SNR  using the proposed hybrid transceiver design for quadrature-phase shift keying (QPSK) modulation. A similar trend is observed, where the proposed design using the BL-based estimated \ac{CSI} yields a BER sufficiently close to the benchmark. Furthermore, the BER of the \ac{THz} \ac{MIMO} system at $f=0.3$ THz is lower than at $f=0.5$ THz.

{
\subsection{Effects of Low-Resolution ADCs and APSs}
Fig. \ref{fig:THz_ch_capacity_BER_plots}(b) analyzes the effects of employing low-resolution ADCs on the CSI estimation performance of the proposed BL-based approaches. For this, the quantized pilot output $\mathbf{y}_{\mathrm{q}}$ corresponding to the pilot output $\mathbf{y}$ of \eqref{eq: CS_eq_model} is expressed as $\mathbf{y}_{\mathrm{q}} = \mathcal{Q}(\mathbf{y})$, where $\mathcal{Q}(\cdot)$ represents the  element-wise quantization operator. Hence, the $i$th element $\mathbf{y}_{\mathrm{q}}(i)$ of the quantized pilot output is given as $$\mathbf{y}_{\mathrm{q}}(i) = \mathcal{Q}( \mathrm{Real}[\mathbf{y}(i)]) + j\mathcal{Q}( \mathrm{Imag}[\mathbf{y}(i)]).$$ Note that for a $b_{\mathrm{q}}$-bit quantizer, the number of levels $N_{\mathrm{q}} = 2^{b_{\mathrm{q}}}$, which implies that the quantizer output $\mathcal{Q}(y)$ for any scalar $y \in \mathbb{R}$ is given as
\begin{align*} 
\mathcal{Q}(y) = \left\lbrace \begin{array}{ll} v_1, & {y\in [u_0, u_1];} \\ v_2, & {y\in (u_1, u_2];} \\ \vdots & \vdots \\ v_{N_{\mathrm{q}}}, & {y\in (u_{N_{\mathrm{q}}-1}, u_{N_{\mathrm{q}}}],} \end{array} \right.
\end{align*}
where $u_0 < u_1 < \cdots < u_{N_{\mathrm{q}}}$ denote the quantization thresholds, whereas $\left\{ v_i \right\}_{i = 1}^{N_{\mathrm{q}}}$ represent the quantizer output levels. For simplicity, we consider a uniform mid-point quantizer, which obeys
\begin{align*} 
&u_i = (-N_{\mathrm{q}}/2+i)\Delta, \,i= 0,\ldots, N_{\mathrm{q}},\\ 
&v_i = (u_{i-1} + u_i)/2, \,i=1, \ldots, N_{\mathrm{q}}, 
\end{align*}
where $\Delta$ denotes the quantization step-size. Furthermore, the model for the quantized pilot output $\mathbf{y}_{\mathrm{q}}$ can be expressed as $$\mathbf{y}_{\mathrm{q}} = \widetilde{\mathbf{\Phi}} \mathbf{h}_{b} + \mathbf{v} + \mathbf{v}_{\mathrm{q}},$$ where $\mathbf{v}_{\mathrm{q}}$ denotes the additional quantization noise. The NMSE performance of the proposed sparse channel estimation schemes considering different ADC resolutions is illustrated in Fig. \ref{fig:THz_ch_capacity_BER_plots}(b). One can readily observe that the NMSEs of the proposed techniques for $b_{\mathrm{q}} = 6$-bit ADC resolution are almost identical to that of the $\infty$-bit resolution, i.e. for the analog pilot outputs. Furthermore, the NMSE increases upon decreasing the ADC resolution, which is attributed to the increased quantization noise. However, for the low SNR regime of $-10$ dB to $10$ dB, which is a typical scenario in the THz band, the NMSEs achieved for $4$- and $3$-bit ADC resolutions are still acceptable. This demonstrates the feasibility of the proposed CSI estimation schemes for  practical THz hybrid MIMO systems also, which demand low-resolution ADCs due to their high bandwidth for the sake of reducing their power consumption.}

{Fig. \ref{fig:THz_ch_capacity_BER_plots}(b) also demonstrates the effects of using low-resolution APSs on the CSI estimation performance. Note that setting the RF TPC and RC using the DFT matrices requires $\log_2(N_T)$- and $\log_2(N_R)$-bit APSs, respectively. Thus, $5$-bit APSs are sufficient for an efficient sparse CSI estimation in a THz hybrid MIMO system having $N_T = N_R = 32$ antennas. Furthermore, the proposed CSI estimation model is general, and it can also operate with APSs having further low resolution of $3$-bit and $4$-bit, as seen in the Fig. \ref{fig:THz_ch_capacity_BER_plots}(b).}
\section{Conclusions} \label{sec:conclusion}
This work developed a practical \ac{MIMO} channel model considering several key aspects of the \ac{THz} band, such as the reflection losses and molecular absorption. Then a sparse \ac{CSI} estimation model was developed for exploiting the underlying angular-sparsity of the \ac{THz} \ac{MIMO} channel, followed by the \ac{OMP} and improved \ac{BL}-based frameworks for CSI estimation. Furthermore, the \ac{BCRLB} was also determined for benchmarking the performance of the proposed channel estimation techniques. Finally, optimal hybrid TPC and RC designs were developed, which directly employ the estimated beamspace domain \ac{CSI} and require only limited \ac{CSI} feedback. Our simulation setup employed practical \ac{THz} \ac{MIMO} channel parameters obtained from the \ac{HITRAN}-database. The proposed \ac{BL} framework was seen to yield both an \ac{MSE} performance close to the \ac{BCRLB} and an improved \ac{ASE}. Furthermore, the proposed frameworks require a reduced number of pilot beams for sparse signal recovery using compressed measurements. However, both the \ac{ASE} and \ac{BER} degraded upon increasing the frequency as well as the transmission distance, which became particularly pronounced at certain specific frequencies, where the molecular absorption was extremely high.
\appendices
\section{Proof of Lemma \ref{lem:lemma1}} \label{appen:proof_lem_1}
The total coherence $\mu^t\left(\widetilde{\mathbf \Phi}\right)$ of the equivalent sensing matrix $\widetilde{\mathbf \Phi}$ is defined as \cite{elad2007optimized,li2013projection} $$\mu^t\left(\widetilde{\mathbf \Phi}\right) = \sum_{i = 1}^{G_RG_T}\sum_{j=1, j \neq i}^{G_RG_T} \left\vert \widetilde{\mathbf \Phi}_{i}^H  \widetilde{\mathbf \Phi}_{j} \right\vert^2,$$ where the quantities $\widetilde{\mathbf \Phi}_{i}$ and $\widetilde{\mathbf \Phi}_{j}$ represent the $i$th and $j$th columns, respectively, of the matrix $\widetilde{\mathbf \Phi}$. Note that it can be bounded as follows: $$\mu^t\left(\widetilde{\mathbf \Phi}\right) \leq \left\Vert \widetilde{\mathbf \Phi} \widetilde{\mathbf \Phi}^H \right\Vert_F^2 = \sum_{i = 1}^{G_RG_T}\sum_{j=1}^{G_RG_T} \left\vert \widetilde{\mathbf \Phi}_{i}^H  \widetilde{\mathbf \Phi}_{j} \right\vert^2.$$  Substituting $\widetilde{\mathbf F} = \mathbf{X}_{p}^T \mathbf{F}_{\text{RF}}^T \mathbf A^{*}_T (\Phi_T)$ and $\widetilde{\mathbf W} = \mathbf W^H_{\text{BB}} \mathbf W^H_{\text{RF}} \mathbf A_R(\Phi_R)$ in \eqref{eq:eq_sensing_mat}, one can rewrite the above bound as
\begin{align}
\mu^t\left(\widetilde{\mathbf \Phi}\right) \leq \left\Vert \left(\widetilde{\mathbf F}\widetilde{\mathbf F}^H\right) \boldsymbol \otimes \left( \widetilde{\mathbf W}\widetilde{\mathbf W}^H\right) \right\Vert_F^2.
\end{align}
Furthermore, employing the relationship  $\left\Vert \mathbf{A} \boldsymbol \otimes \mathbf{B} \right\Vert_F^2 = \left\Vert \mathbf{A} \right\Vert_F^2 \left\Vert \mathbf{B} \right\Vert_F^2$, one can simplify the above expression as 
\begin{align}
\mu^t\left(\widetilde{\mathbf \Phi}\right) \leq  \left\Vert \widetilde{\mathbf F}\widetilde{\mathbf F}^H \right\Vert_F^2 \left\Vert  \widetilde{\mathbf W}\widetilde{\mathbf W}^H \right\Vert_F^2 = \frac{G_T}{N_T} \left\Vert \mathbf{X}_{p}^T \mathbf{X}_{p}^{*} \right\Vert_F^2\ \times \frac{G_R}{N_R}\left\Vert \mathbf{W}_{\text{BB}}^H \mathbf{W}_{\text{BB}} \right\Vert_F^2. \label{eq:total_coh_simplified_1}
\end{align}
The simplification in the above result exploits the semi-unitary property of the matrices $\mathbf A_T (\Phi_T)$ and $\mathbf A_R(\Phi_R)$, respectively, given in \eqref{eq:sem _unit}, and owing  to  the  choice of the RF TPC  $\mathbf{F}_{\text{RF}}$ and RC  $\mathbf{W}_{\text{RF}}$ as the DFT matrices. From \eqref{eq:total_coh_simplified_1}, it can be readily observed that minimization of the total coherence $\mu^t\left(\widetilde{\mathbf \Phi}\right)$ can be achieved by the minimization of the quantities $\left\Vert \mathbf{X}_{p}^T \mathbf{X}_{p}^{*} \right\Vert_F^2$ and $\left\Vert \mathbf{W}_{\text{BB}}^H \mathbf{W}_{\text{BB}} \right\Vert_F^2$, with respect to the pilot matrix $\mathbf{X}_{p}$ and the baseband RC matrix $\mathbf{W}_{\text{BB}}$, respectively. The optimal pilot matrix $\mathbf{X}_{p}$ subject to a suitable training power constraint can now be derived as follows. 

Note that minimization of $\left\Vert \mathbf{X}_{p}^T \mathbf{X}_{p}^{*} \right\Vert_F^2$ is equivalent to the  minimization of $\left\Vert \mathbf{X}_{p,i}^T \mathbf{X}_{p,i}^{*} \right\Vert_F^2$ with respect to each $\mathbf{X}_{p,i}, 1 \leq i \leq N_F,$ since the pilot matrix $\mathbf{X}_{p}$ is block diagonal. Therefore, the optimal pilot matrix design optimization problem can be formulated as
\begin{align}
\min_{\mathbf{X}_{p,i}} \left\Vert \mathbf{X}_{p,i}^T \mathbf{X}_{p,i}^{*} \right\Vert_F^2, \ \ 
\text{s.t.} \left\Vert \mathbf{X}_{p,i} \right\Vert_F^2 = \frac{M_T}{N_F}. \label{eq:pilot_design_opt}
\end{align}
The closed-form solution of the above problem can be derived as follows. Let $\mathbf{X}_{p,i} = \mathbf{U} \boldsymbol\Sigma \mathbf{V}_1^H$ represent the \ac{SVD} of the pilot matrix $\mathbf{X}_{p,i}$, where the matrices $\mathbf{U}$ and $\mathbf{V}_1$ are unitary matrices of size $N_\text{RF} \times N_\text{RF}$ and ${\frac{M_T}{N_F}} \times {\frac{M_T}{N_F}}$, respectively. Since $M_T \leq N_T$, which implies that $\frac{M_T}{N_F}=\frac{M_T N_\text{RF}}{N_T} \leq N_\text{RF}$, the singular matrix $\boldsymbol\Sigma \in \mathbb C^{N_\text{RF} \times \frac{M_T}{N_F}}$ has the following structure
\begin{align}
\boldsymbol\Sigma = \left[\mathrm{diag}\left(\sigma_1, \cdots, \sigma_{{\frac{M_T}{N_F}}}\right) \ \ \ \mathbf{0}_{{{\frac{M_T}{N_F}}} \times N_\text{RF}-{{\frac{M_T}{N_F}}}}\right]^T, \label{eq:singular_matrix_pilot}
\end{align} 
where $\sigma_1, \cdots, \sigma_{{\frac{M_T}{N_F}}}$ denote the singular values of the pilot matrix $\mathbf{X}_{p,i}$. Exploiting now the property of the unitary matrices $\mathbf{U}$ and $\mathbf{V}_1$, and the expression of the singular matrix $\boldsymbol\Sigma$ defined in \eqref{eq:singular_matrix_pilot}, the optimization problem in \eqref{eq:pilot_design_opt} can be reformulated as
\begin{align}
\min_{\sigma_i, 1 \leq \sigma_i \leq {\frac{M_T}{N_F}}} \sum_{i=1}^{{\frac{M_T}{N_F}}} \sigma_i^4, \ \ \  
\text{s.t.} \sum_{i=1}^{{\frac{M_T}{N_F}}} \sigma_i^2 = {\frac{M_T}{N_F}}.
\end{align}
Using the KKT conditions \cite{boyd2004convex} for solving the above optimization problem, the solution is obtained as $\sigma_i^{\text{opt}} =  1, \ 1\leq i \leq {\frac{M_T}{N_F}}$. Substituting the values of $\sigma_i^{\text{opt}}$ into the singular matrix of \eqref{eq:singular_matrix_pilot}, followed by employing the resultant expression in the \ac{SVD} of the pilot matrix $\mathbf{X}_{p,i}$ yields 
\begin{align}
\mathbf{X}_{p,i} = \mathbf{U} \Big[\mathbf{I}_{{\frac{M_T}{N_F}}} \ \ \mathbf{0}_{{{\frac{M_T}{N_F}}} \times N_\text{RF}-{{\frac{M_T}{N_F}}}}\Big]^T \mathbf{V}_1^H,
\end{align}
which is the desired result. The optimal RC matrix $\mathbf{W}_{\text{BB}}$ can also be derived following similar lines. $\blacksquare$
\section{Proof of Lemma \ref{lem:lemma2}} \label{appen:proof_lem_2}
The \ac{E-step} used for determining the conditional expectation of the log-likelihood function $\mathbf{\mathcal{L}}\left(\mathbf{\Gamma}| \widehat{\mathbf{\Gamma}}^{(j-1)}\right)$ is given as
\begin{align}
\mathbf{\mathcal{L}}\left(\mathbf{\Gamma}| \widehat{\mathbf{\Gamma}}^{(j-1)}\right)=&\mathbb{E}_{{{\mathbf{h}}_{b}\vert{\mathbf{y}}; \widehat{\mathbf{\Gamma}}^{(j-1)}}} \left\{\log f({ \mathbf{y}}, {\mathbf{h}}_{b}; {\mathbf{\Gamma}} )\right\}\nonumber\\
% &= \mathbb{E}_{{{\mathbf{h}}_{b}|{\mathbf{y}}; \widehat{\mathbf{\Gamma}}^{(j-1)}}} \left\{ \log \left[f( {\mathbf{y}}| {\mathbf{h}}_{b})\right] +\log \left[ f({ \mathbf{h}}_{b}; {\mathbf{\Gamma}})\right] \right\}\nonumber\\
= &\mathbb{E}_{{{\mathbf{h}}_{b}|{\mathbf{y}}; \widehat{\mathbf{\Gamma}}^{(j-1)}}} \left\{ \log \left[f( {\mathbf{y}}| {\mathbf{h}}_{b})\right]\right\} +\mathbb{E}_{{{\mathbf{h}}_{b}|{\mathbf{y}}; \widehat{\mathbf{\Gamma}}^{(j-1)}}}\left\{\log \left[ f({ \mathbf{h}}_{b}; {\mathbf{\Gamma}})\right] \right\}.\label{eq: smv_gamma_update1}
\end{align}
The term inside the first $\mathbb{E}\{\cdot\}$ operator can be simplified as
\begin{align}
 \log f( {\mathbf{y}}| {\mathbf{h}}_{b}) = -M_TM_R\log(\pi) - \log\left[\det(\mathbf{R}_{v})\right] - \left({\mathbf{y} -  \widetilde{\mathbf \Phi}\mathbf h_{b}}\right)^H \mathbf{R}_{v}^{-1} \left({\mathbf{y}- \widetilde{\mathbf \Phi}\mathbf h_{b}}\right)\label{eq:first_term},
\end{align}
which is seen to be independent of the hyperparameter matrix $\boldsymbol \Gamma$. Therefore, the subsequent \ac{M-step} can ignore this term, while maximizing the likelihood $\mathbf{\mathcal{L}} \left( \mathbf{\Gamma}| \widehat{\mathbf{\Gamma}}^{(j-1)} \right)$ in \eqref{eq: smv_gamma_update1}. The equivalent optimization problem in the \ac{M-step} follows as
\begin{align}
\widehat{\boldsymbol\Gamma}^{(j)}=\arg \max_{\boldsymbol \Gamma} \mathbb{E}_{{{\mathbf{h}}_{b}|{\mathbf{y}}; \widehat{\mathbf{\Gamma}}^{(j-1)}}} &\left\{ \log \left[ f({ \mathbf{h}}_{b}; {\mathbf{\Gamma}})\right] \right\}. \label{eq: M-SIP_update1}
\end{align}
Upon substituting $f({ \mathbf{h}}_{b}; {\mathbf{\Gamma}})$ from \eqref{eq: gaussian_prior1} into the above optimization objective, 
% one obtains
% \begin{align}
% \widehat{\boldsymbol\Gamma}^{(j)} = 
% \arg \max_{\boldsymbol \Gamma}  \sum\limits_{i=1}^{G_RG_T} &\left[- \log(\gamma_i)\right. \nonumber\\&\left.-\displaystyle\frac{1}{\gamma_i} \mathbb{E}_{{{\mathbf{h}}_{b}|{\mathbf{y}}; \widehat{\mathbf{\Gamma}}^{(j-1)}}} \left\{ \vert \mathbf h_{b}(i)\vert^2 \right\}\right]. \label{eq: smv_gamma_update3}
% \end{align}
the maximization problem can be decoupled into separate maximization problems with respect to the individual hyperparameters $\gamma_i$ as
\begin{align}
\widehat{\gamma}_i^{(j)} = \arg \max_{\gamma_i}  \left[- \log(\gamma_i)-\displaystyle\frac{\mathbb{E}_{{{\mathbf{h}}_{b}|{\mathbf{y}}; \widehat{\mathbf{\Gamma}}^{(j-1)}}} \left\{ \vert \mathbf h_{b}(i)\vert^2 \right\}}{\gamma_i} \right]\!.\!\! \label{eq: smv_gamma_update3_3}
\end{align}
Solving the above problem yields the estimates $\widehat{\gamma}_i^{(j)}$ as
\begin{align}
\widehat{\gamma}_i^{(j)}=\mathbb{E}_{{{\mathbf{h}}_{b}|{\mathbf{y}}; \widehat{\mathbf{\Gamma}}^{(j-1)}}} \left\{ \vert \mathbf h_{b}(i)\vert^2 \right\}. \label{eq: smv_gamma_update4}
\end{align}
To simplify the conditional expectation $\mathbb{E}_{{{\mathbf{h}}_{b}|{\mathbf{y}}; \widehat{\mathbf{\Gamma}}^{(j-1)}}} \left\{ \cdot \right\}$ above, the \textit{a posteriori} \ac{pdf} $f\left({{\mathbf{h}}_{b}|\mathbf{y}};\widehat{\boldsymbol\Gamma}^{(j-1)}\right)$ of $\mathbf{h}_{b}$ can be expressed as \cite{kay1993fundamentals}: $f\left({{\mathbf{h}}_{b}|\mathbf{y}};\widehat{\boldsymbol\Gamma}^{(j-1)}\right) = \mathcal{CN}\left( \boldsymbol{\mu}_{b}^{(j)}, \mathbf{R}_{b}^{(j)}\right),$
where the quantities $\boldsymbol{\mu}_{b}^{(j)} \in \mathbb{C}^{G_R G_T \times 1}$ and $\mathbf{R}_{b}^{(j)}  \in \mathbb{C}^{G_R G_T \times G_R G_T}$ are defined as
\begin{align}
    \boldsymbol{\mu}_{b}^{(j)}=\mathbf{R}_{b}^{(j)} \widetilde{\mathbf{\Phi}}^H \mathbf{R}_{v}^{-1} {\mathbf{y}},\ \ \ 
    \mathbf{R}_{b}^{(j)}= \left[ \widetilde{\mathbf{\Phi}}^H \mathbf{R}_{v}^{-1} \widetilde{\mathbf{\Phi}} + \left(\widehat{\boldsymbol\Gamma}^{(j-1)}\right)^{-1} \right]^{-1},
\end{align}
which represent the \textit{a posteriori} mean vector and covariance matrix, respectively, of the beamspace channel $\mathbf{h}_b$. Employing the \textit{a posteriori} \ac{pdf} $f\left({{\mathbf{h}}_{b}|\mathbf{y}};\widehat{\boldsymbol\Gamma}^{(j-1)}\right)$, the expression in \eqref{eq: smv_gamma_update4} can be simplified to  
\begin{align}
\widehat{\gamma}_i^{(j)} = \mathbf{R}_{b}^{(j)}(i,i) + \left\vert \boldsymbol{\mu}_{b}^{(j)}(i) \right\vert^2, \label{eq:hyperparameter_computation}
\end{align}
which is the desired expression. $\blacksquare$
\section{Derivation of the Hybrid \ac{MMSE} RC}
\label{appen:proof_MMSE_comb}
In order to derive the required expression, one can simplify the objective function of the optimization problem given in \eqref{eq:hybrid_combiner_optmzn} as 
\begin{align}
   \mathbb{E}\bigg\{ \left\Vert\bar{\mathbf x}-\bar{\mathbf{W}}_\text{BB}^H \bar{\mathbf{W}}_\text{RF}^H\mathbf y\right\Vert^{2}_{2}\bigg\} 
   &=\mathbb{E}\left\{ \mathrm{Tr}\left[\left( \bar{\mathbf x}-\bar{\mathbf{W}}_\text{BB}^H \bar{\mathbf{W}}_\text{RF}^H\mathbf y\right)\left( \bar{\mathbf x}-\bar{\mathbf{W}}_\text{BB}^H \bar{\mathbf{W}}_\text{RF}^H\mathbf y\right)^H\right] \right\}\nonumber\\
   &=\mathrm{Tr}\left[\mathbb{E}\left\{\bar{\mathbf x}\bar{\mathbf x}^H\right\}\right]-2\mathrm{Re}\left\{\mathrm{Tr}\left[\mathbb{E}\left\{\bar{\mathbf x}\mathbf{y}^H\right\}\bar{\mathbf{W}}_\text{RF}\bar{\mathbf{W}}_\text{BB}\right]\right\}\nonumber\\
   &\hspace{40pt}+\mathrm{Tr}\left[\bar{\mathbf{W}}_\text{BB}^H \bar{\mathbf{W}}_\text{RF}^H\mathbb{E}\left\{\mathbf{y}\mathbf{y}^H\right\}\bar{\mathbf{W}}_\text{RF}\bar{\mathbf{W}}_\text{BB}\right]. \label{eq:final_mmse_hyb_rc}
\end{align}
Since, the minimization is performed with respect to $\left(\bar{\mathbf{W}}_\text{RF},\bar{\mathbf{W}}_\text{BB}\right)$, one can neglect the first term $\mathrm{Tr}\left[\mathbb{E}\left\{\bar{\mathbf x}\bar{\mathbf x}^H\right\}\right]$. In order to further simplify \eqref{eq:final_mmse_hyb_rc}, one can add the constant term $\mathrm{Tr}\left[\bar{\mathbf{W}}_\text{M}^H \mathbb{E}\left\{\mathbf{y}\mathbf{y}^H\right\}\bar{\mathbf{W}}_\text{M}\right]$ in the above expression, where the optimal \ac{MMSE} RC $\bar{\mathbf{W}}_\text{M}$ obeys $\bar{\mathbf{W}}_\text{M}^H=\mathbb{E}\left\{\bar{\mathbf x}\mathbf{y}^H\right\}\mathbb{E}\left\{\mathbf{y}\mathbf{y}^H\right\}^{-1}$. Finally, the above expression can be reformulated as 
\begin{align}
   \mathbb{E}\bigg\{\left\Vert\bar{\mathbf x}-\bar{\mathbf{W}}_\text{BB}^H \bar{\mathbf{W}}_\text{RF}^H\mathbf y\right\Vert^{2}_{2}\bigg\}
   =&\mathrm{Tr}\left[\bar{\mathbf{W}}_\text{M}^H \mathbb{E}\left\{\mathbf{y}\mathbf{y}^H\right\}\bar{\mathbf{W}}_\text{M}\right]-2\mathrm{Re}\left\{\mathrm{Tr}\left[\bar{\mathbf{W}}_\text{M}^H\mathbb{E}\left\{\mathbf{y}\mathbf{y}^H\right\}\bar{\mathbf{W}}_\text{RF}\bar{\mathbf{W}}_\text{BB}\right]\right\}\nonumber\\
   & \hspace{50pt}+\mathrm{Tr}\left[\bar{\mathbf{W}}_\text{BB}^H \bar{\mathbf{W}}_\text{RF}^H\mathbb{E}\left\{\mathbf{y}\mathbf{y}^H\right\}\bar{\mathbf{W}}_\text{RF}\bar{\mathbf{W}}_\text{BB}\right]\nonumber\\
   =& \mathrm{Tr}\left[\left( \bar{\mathbf{W}}_\text{M}^H{-}\bar{\mathbf{W}}_\text{BB}^H \bar{\mathbf{W}}_\text{RF}^H\right)\mathbb{E}\left\{\mathbf{y}\mathbf{y}^H\right\}\left( \bar{\mathbf{W}}_\text{M}^H{-}\bar{\mathbf{W}}_\text{BB}^H \bar{\mathbf{W}}_\text{RF}^H\right)^H\right]\nonumber\\
   =&\left\Vert\mathbb{E}\left\{\mathbf{y}\mathbf{y}^H\right\}^{1/2}\left(\bar{\mathbf{W}}_\text{M}-\bar{\mathbf{W}}_\text{RF} \bar{\mathbf{W}}_\text{BB}\right)\right\Vert_{F}^2,
\end{align}
which is the desired expression. $\blacksquare$

% {pending items: Fig. 2a legend is missing}

\bibliographystyle{IEEEtran}
\bibliography{References}
\end{document}